\newtheorem{theorem}{Theorem}
\newtheorem{proposition}{Proposition}
\newtheorem{remark}{Remark}
\newtheorem{lemma}{Lemma}
\newtheorem{example}{Example}
\providecommand{\keywords}[1]
{
  \small	
  \textbf{Keywords:} #1
}
\newcolumntype{Y}{>{\centering\arraybackslash}X}
\newcommand{\customcite}[2][]{
  [\citeauthor{#2} (\citeyear{#2}), #1]
}
\DeclareMathOperator*{\esssup}{ess\,sup}
\title{Existence, uniqueness and positivity of solutions to the Guyon-Lekeufack path-dependent volatility model with general kernels}
\author[1,2]{Hervé Andrès}
\author[2]{Benjamin Jourdain}
\affil[1]{Milliman R\&D, Paris, France}
\affil[2]{CERMICS, École des Ponts, INRIA, Marne-la-Vallée, France.}
\date{\today}
\begin{document}

\maketitle

\begin{abstract}
    We show the existence and uniqueness of a continuous solution to a path-dependent volatility model introduced by \cite{guyon2023} to model the price of an equity index and its spot volatility. The considered model for the trend and activity features can be written as a Stochastic Volterra Equation (SVE) with non-convolutional and non-bounded kernels as well as non-Lipschitz coefficients. We first prove the existence and uniqueness of a solution to the SVE under integrability and regularity assumptions on the two kernels and under a condition on the second kernel weighting the past squared returns which ensures that the activity feature is bounded from below by a positive constant. Then, assuming in addition that the kernel weighting the past returns is of exponential type and that an inequality relating the logarithmic derivatives of the two kernels with respect to their second variables is satisfied, we show the positivity of the volatility process which is obtained as a non-linear function of the SVE's solution. We show numerically that the choice of an exponential kernel for the kernel weighting the past returns has little impact on the quality of model calibration compared to other choices and the inequality involving the logarithmic derivatives is satisfied by the calibrated kernels. These results extend those of \cite{nutz2023guyon}. 
\end{abstract}
\hspace{7pt}

\keywords{path-dependent volatility, Stochastic Volterra Equations}\\

On a probability space $(\Omega, \mathcal{F}, \mathbb{P})$, we consider two independent Brownian motions $(W_t)_{t\ge 0}$ and $(B_t)_{t\ge 0}$ and let $\mathcal{F}_t=\sigma((B_s)_{s\ge 0},(W_s)_{s\in[0,t]})$ for $t\ge 0$. We consider a stock price $(S_t)_{t\ge 0}$ evolving for $t\ge 0$ with path-dependent volatility $\sigma_t$ according to
\begin{empheq}[left = \empheqlbrace]{align}
\frac{dS_t}{S_t} &= \sigma_t dW_t\label{eq:pdv_1}\\
  \sigma_t & = \beta_0 + \beta_1 R_{1,t} +\beta_2\sqrt{R_{2,t}}, \label{eq:pdv_2}\\
  R_{1,t} &=  \int_{-\Delta}^0 K_1(s,t)\sigma_s dB_{-s}+\int_{0}^t K_1(s,t)\sigma_s dW_s \label{eq:pdv_3}\\
  R_{2,t} &= \int_{-\Delta}^t K_2(s,t) \sigma_s^2 ds \label{eq:pdv_4}
\end{empheq}
where $\beta_0,\beta_2 \ge 0$, $\beta_1\le 0$, $\Delta \in [0,+\infty]$ and $K_1,K_2:\Gamma \rightarrow \mathbb{R}_+$ are kernels weighting the past returns and the past squared returns respectively with
\begin{equation*}
  \Gamma:= \{(s,t)\in \mathbb{R}\times\mathbb{R}^+ \mid s< t \}. 
\end{equation*}
We assume that $S_0=s_0>0$ and the past evolutions of $R_1$ and $R_2$ between $-\Delta$ and $0$ are given by the deterministic initial conditions $(r_{1,u})_{-\Delta \le u \le 0}$ and $(r_{2,u})_{-\Delta \le u \le 0}$ respectively with $r_{1,u} \in \mathbb{R}$ and $r_{2,u}\ge0$. The equality \eqref{eq:pdv_2} is assumed to hold also for $t\in (-\infty,0]$ when $\Delta=+\infty$ and for $t\in[-\Delta,0]$ otherwise. The integrand in the first term $\int_{-\Delta}^0 K_1(s,t)\sigma_s dB_{-s}$ in the right-hand side of \eqref{eq:pdv_3} is deterministic and this term is defined as the Wiener integral $\int_0^{\Delta}K_1(-s,t)\sigma_{-s} dB_{s}$.

This model has its roots in the work of \cite{guyon2023}. In this paper, they conduct an empirical study on the dependence of the volatility of an equity index with respect to the past path of the associated equity price. To this end, they consider the following model:
\begin{equation}\label{eq:empirical_model}
  \text{Volatility}_t = \beta_0+\beta_1 R_{1,t} + \beta_2 \sqrt{R_{2,t}}. 
\end{equation}
The definition and interpretation of each term is provided below.
\begin{itemize}
  \item $\text{Volatility}$ is a proxy measure of the spot volatility of an equity index. Two measures are considered: the value
  of an implied volatility index such as the VIX and an estimator of the realized volatility over one day using intraday observations of the equity index. 
  \item $R_1$ is a trend feature defined by:
  \begin{equation*}
    R_{1,t} = \sum_{t_i\le t} K_1(t-t_i) r_{t_i}
  \end{equation*}
  where $r_{t_i}$ is the daily return between day $t_{i-1}$ and day $t_i$ of the equity index and $K_1:\mathbb{R}_+\rightarrow \mathbb{R}_+$ is a decreasing kernel weighting the past returns. Since $\beta_1\le 0$, this feature allows to capture the leverage effect, i.e. the fact that volatility tends to rise when prices fall.
  \item $R_2$ is an activity or volatility feature defined by:
  \begin{equation*}
    R_{2,t} = \sum_{t_i \le t } K_2(t-t_i) r_{t_i}^2 
  \end{equation*}
  where $K_2$ is also a decreasing kernel. Since $\beta_2\ge 0$, this feature allows to capture the volatility clustering phenomenon, i.e. the fact that periods of large volatility tend to be followed by periods of large volatility, and periods of small volatility tend to be followed by periods of small volatility. Note that Guyon and Lekeufack consider the square root of $R_2$ in model (\ref{eq:empirical_model}) in order to get a quantity that is homogeneous to a volatility. 
\end{itemize}
The key result of Guyon and Lekeufack is to show that model (\ref{eq:empirical_model}), when calibrated on historical data, allows explaining (in the sense of the $R^2$ score) more than 80\% of the variations of the volatility for several main equity indices. They conclude that volatility is mostly path-dependent and they propose a continuous version (see model (4.1) in \citeauthor{guyon2023}, \citeyear{guyon2023}) of model (\ref{eq:empirical_model}). The model (\ref{eq:pdv_1})-(\ref{eq:pdv_4}) is a reproduction of this continuous version but with two differences:
\begin{enumerate}
  \item we consider general kernels for the weighting of the past returns and the past squared returns instead of convolutional ones and
  \item the lower boundary of the integrals defining $R_1$ and $R_2$ is parametrized by $\Delta$ instead of being fixed to $-\infty$. 
\end{enumerate}
Note that Equations (\ref{eq:pdv_2})-(\ref{eq:pdv_4}) can be rewritten as follows:
\begin{equation}\label{eq:r_dynamics}
  R_t = g(t)+ \int_0^t K_1(s,t) \gamma(R_s) dW_s+\int_0^t K_2(s,t)b(R_s)ds, \quad t\ge 0
\end{equation}
where $R_t := (R_{1,t},R_{2,t})$ and 
\begin{equation}\label{eq:g_def}
  \begin{aligned}
    g(t) &=\begin{pmatrix}
      g_1(t) \\
      g_2(t)
    \end{pmatrix}
      :=
      \begin{pmatrix}
      \int_{-\Delta}^0  K_1(s,t)(\beta_0+\beta_1r_{1,s}+\beta_2\sqrt{r_{2,s}})dB_{-s} \\
      \int_{-\Delta}^0 K_2(s,t) (\beta_0+\beta_1r_{1,s}+\beta_2\sqrt{r_{2,s}})^2 ds 
    \end{pmatrix} \\
    b(r_1,r_2) &= \begin{pmatrix}
      0 \\
      (\beta_0+\beta_1r_1+ \beta_2\sqrt{r_2})^2
    \end{pmatrix} \\
    \gamma(r_1,r_2) &= \begin{pmatrix}
      \beta_0+\beta_1r_1+ \beta_2\sqrt{r_2}\\
      0
    \end{pmatrix}. 
  \end{aligned}
\end{equation}
Thus, in the model defined by Equations (\ref{eq:pdv_1})-(\ref{eq:pdv_4}), the process $(R_1,R_2)$ is autonomous since $\sigma$ is a function of $R_1$ and $R_2$ only. Moreover, provided that Equation (\ref{eq:r_dynamics}) admits an $(\mathcal{F}_t)$-adapted solution such that $(\sigma_t)_{t\ge 0}$ defined by (\ref{eq:pdv_2}) satisfies: 
\begin{equation*}
  \mathbb{P}\left(\int_0^T \sigma_s^2 ds <+\infty \right) = 1, \quad \forall T>0,
\end{equation*}
then Equation (\ref{eq:pdv_1}) has a unique solution given by:
\begin{equation*}
  S_t = s_0 \exp\left(\int_0^t \sigma_s dW_s -\frac{1}{2}\int_0^t \sigma_s^2 ds \right). 
\end{equation*}

The objective of this note is to show global existence and uniqueness of a continuous solution to Equation (\ref{eq:r_dynamics}) and to establish sufficient conditions under which the volatility process $(\sigma_t)_{t\ge 0}$ is positive. Note that Equation (\ref{eq:r_dynamics}) corresponds to a Stochastic Volterra Equation (SVE). SVEs have been extensively studied in the literature starting with the work of \citeauthor{berger1} (\citeyear{berger1}, \citeyear{berger2}) who develop existence and uniqueness results for SVEs driven by a Brownian motion and with Lipschitz coefficients. These results have then been extended in numerous directions. \cite{protter1985} generalises the existence and uniqueness results to SVEs driven by right continuous semimartingales. \cite{pardoux1990} and \cite{alos1997} consider the case of SVEs with anticipating coefficients. \cite{cochran1995} and \cite{coutin2001} study SVEs with singular kernels, i.e. unbounded kernels such as the fractional kernel $K_H(s,t) = (t-s)^{H-1/2}$ for $H\in(0,1/2)$ and $0\le s \le t$. \cite{wang2008} proves the existence and uniqueness of a solution to an SVE with non-Lipschitz coefficients satisfying a condition similar to the one of \cite{yamada1971uniqueness} for stochastic differential equations. \cite{zhang2010stochastic} works on SVEs in Banach spaces with locally Lipschitz coefficients and singular kernels. More recently, there has been a resurgence of interest for SVEs in the mathematical finance literature with the advent of rough volatility models (see e.g. \citeauthor{gatheral2018volatility}, \citeyear{gatheral2018volatility} and \citeauthor{eleuch2018}, \citeyear{eleuch2018}) who rely on such equations. Among this new wave of publications, let us mention for instance those of \cite{abijaber2019}, \cite{abijaber2021} and \cite{jaber2024polynomial} who address several challenges raised by rough volatility models that could not be dealt with using the earlier results on SVEs. Due to the quadratic growth induced by the square and the fact that the square root function is non-differentiable at 0, the coefficients $b$ and $\gamma$ in Equation (\ref{eq:r_dynamics}) are non-Lipschitz and none of the above results can be used directly (even those considering non-Lipschitz coefficients) to show the existence and uniqueness of a solution to (\ref{eq:r_dynamics}). To obtain the existence, we resort to a localization technique which exploits the particular structure of the coefficients $b$ and $\gamma$.\\

Regarding the positivity of the volatility process, the SVE literature only deals with the existence of non-negative solutions (see \citeauthor{alfonsi2023nonnegativity}, \citeyear{alfonsi2023nonnegativity} and \citeauthor{alfonsi2024non}, \citeyear{alfonsi2024non}), while we aim at obtaining the positivity of a non-linear function of the coordinates of the solution of an SVE. In the path-dependent volatility (PDV) literature, \cite{guyon2023} first observed that, in the 2-factor Markovian PDV model (corresponding to model (\ref{eq:pdv_1})-(\ref{eq:pdv_4}) with exponential kernels, i.e. $K_i(s,t) = \lambda_i e^{-\lambda_i(t-s)}$ for $i\in\{1,2\}$), the volatility process $\sigma$ is non-negative provided that $\lambda_2<2\lambda_1$ because, under this condition, the drift of $\sigma$ is positive whenever $\sigma$ reaches 0. This result has then been strengthened by \cite{nutz2023guyon} who proved that $\sigma_t \ge \sigma_0\exp\left(\beta_1\lambda_1W_t -\lambda_1t -\frac{1}{2}\beta_1^2\lambda_1^2 t \right)$ so that the volatility process is bounded from below by a positive continuous stochastic process when its initial value $\sigma_0$ is positive. Moreover, they showed existence and uniqueness for the 4-factor Markovian PDV model which essentially corresponds to (\ref{eq:pdv_1})-(\ref{eq:pdv_4}) where both kernels are convex combinations of two exponential kernels, i.e. $K_i(s,t)=\theta_i\lambda_{1,i}e^{-\lambda_{1,i}(t-s)}+(1-\theta_i)\lambda_{2,i}e^{-\lambda_{2,i}(t-s)}$ for $i\in \{1,2\}$. This implies in particular the existence and uniqueness for the 2-factor Markovian PDV model. In fact, in the 4-factor Markovian PDV model, the trend and volatility features are both composed of two factors:
\begin{equation*}
  \begin{aligned}
    R_{1,t} &=  (1-\theta_1)R_{1,0,t} + \theta_1 R_{1,1,t}\\
    R_{2,t} &=  (1-\theta_2)R_{2,0,t} +   \theta_2R_{2,1,t}
  \end{aligned}
\end{equation*}
where $\theta_1,\theta_2 \in [0,1]$ and for $j\in\{0,1\}$:
\begin{equation*}
  \begin{aligned}
    R_{1,j,t} &= e^{-\lambda_{1,j}t}R_{1,j,0}+\int_{0}^t \lambda_{1,j} e^{-\lambda_{1,j}(t-s)}\sigma_s dW_s \\
    R_{2,j,t} &=e^{-\lambda_{2,j}t}R_{2,j,0} +\int_{0}^t \lambda_{2,j} e^{-\lambda_{2,j}(t-s)}\sigma_s^2 ds. 
  \end{aligned}
\end{equation*}
\cite{nutz2023guyon} show existence and uniqueness when the initial conditions satisfy: $R_{1,j,0} \in \mathbb{R}$ and $R_{2,j,0} >0$ for $j \in \{0,1\}$. Note that when $\lambda_{2,0} \ge \lambda_{2,1}$(which we can assume without loss of generality because of the symmetrical roles of $R_{2,0}$ and $R_{2,1}$), the positivity conditions can be weakened to $\theta_2R_{2,1,0}-(1-\theta_2) R_{2,0,0}^- >0$  since
$$\forall t\ge 0,\;R_{2,t}\ge (1-\theta_2)R_{2,0,0}e^{-\lambda_{2,0} t}+\theta_2R_{2,1,0}e^{-\lambda_{2,1} t}\ge \left(\theta_2R_{2,1,0}-(1-\theta_2) R_{2,0,0}^-\right)e^{-\lambda_{2,1} t}.$$
Our contribution is to extend these results to general kernels assuming only some integrability and regularity of the two kernels, in particular the kernels are not required to be of convolutional type or bounded. First, we show existence and uniqueness of a solution to the SVE (\ref{eq:r_dynamics}) and then under some additional assumptions on the two kernels, we find a positive stochastic lower bound for the volatility $\sigma$ by adapting the proof of \cite{nutz2023guyon} for the 2-factor PDV model.  \\

Let us now introduce the assumptions of our first theorem.      
\begin{enumerate}[label=(\textbf{I.\arabic*})] 
  \item For any $T>0$,
  \begin{equation*}
    \sup_{t\in[0,T]}\int_{-\Delta}^t \left(K_1(s,t)^2+K_2(s,t)\right)ds < \infty.
  \end{equation*} \label{hyp:1.1}
  \item For any $T>0$,
  \begin{equation*}
    \limsup_{\varepsilon \downarrow 0} \sup_{t\in [0,T]} \int_t^{t+\varepsilon} \left( K_1(s,t+\varepsilon)^2 + K_2(s,t+\varepsilon) \right) ds < 1.
  \end{equation*}\label{hyp:1.2}
  \item \begin{equation*}
    \sup_{s\in(-\Delta,0]} |\beta_0+\beta_1r_{1,s}+\beta_2\sqrt{r_{2,s}} | < \infty. 
  \end{equation*}\label{hyp:1.3}
  \item There exists $\alpha_1>1$ and $\alpha_2>1$ such that for any $T>0$,
  \begin{equation*}
    \sup_{t\in[0,T]} \int_0^t(K_1(s,t)^{2\alpha_1}+ K_2(s,t)^{\alpha_2} )ds <\infty.
  \end{equation*}\label{hyp:1.4}
  \item There exists $\gamma >0$ such that for each $T>0$, there exists a finite constant $C_T$ such that,
  \begin{equation*}
   \forall 0\le t < t'\le T,\;\sqrt{\int_{-\Delta}^t (K_1(s,t')-K_1(s,t))^2ds}+ \int_{-\Delta}^t |K_2(s,t')-K_2(s,t) | ds \le C_T(t'-t)^{\gamma}. 
  \end{equation*}\label{hyp:1.5}
  \item For any $T>0$,
  \begin{equation*} 
    \inf_{t\in [0,T]} g_2(t) >0. 
  \end{equation*}\label{hyp:1.6}
\end{enumerate}

Our first main result is the following:
\begin{theorem}\label{thm:main_result_1}
Under the assumptions \ref{hyp:1.1}-\ref{hyp:1.6}, there exists a unique continuous solution to Equation (\ref{eq:r_dynamics}). Moreover, the solution is locally $\gamma^*$-Hölder continuous for any $\gamma^*\in \left(0, \min\left(\gamma,\frac{1}{2\alpha_1^*},\frac{1}{\alpha_2^*} \right)\right)$ with $\alpha_i^* = \frac{\alpha_i}{\alpha_i-1}$. 
\end{theorem}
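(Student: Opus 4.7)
The approach combines a localization via truncation with a Picard iteration on a suitable Banach space, exploiting the fact that (\ref{hyp:1.6}) provides a deterministic positive lower bound for $R_2$. I would proceed in four steps.

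\textbf{Step 1 (structural positivity of $R_2$).} Since $b_2 \ge 0$ and $K_2 \ge 0$, any continuous solution of (\ref{eq:r_dynamics}) automatically satisfies $R_{2,t} \ge g_2(t) \ge c := \inf_{t \in [0,T]} g_2(t) > 0$. Consequently, replacing $r_2 \mapsto \sqrt{r_2}$ by the smooth, globally Lipschitz function $r_2 \mapsto \sqrt{r_2 \vee c}$ in $b$ and $\gamma$ does not change the solution set on $[0,T]$, and removes the non-smoothness at the origin.

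\textbf{Step 2 (truncated SVE).} For $N > 0$, multiply $b$ and $\gamma$ by a smooth cut-off $\chi_N$ equal to $1$ on $\{|r_1| \le N,\ 0 \le r_2 \le N\}$ and vanishing outside $\{|r_1| \le 2N,\ 0 \le r_2 \le 2N\}$, calling the resulting coefficients $b^N$ and $\gamma^N$; these are bounded and globally Lipschitz. I would solve the truncated SVE by a fixed-point argument on the Banach space of $(\mathcal{F}_t)$-adapted continuous processes with norm $\|R\|^2_\eta := \sup_{t\le \eta}\mathbb{E}[|R_t|^2]$. Using the Burkholder--Davis--Gundy inequality on the stochastic term, the Lipschitz property of $b^N, \gamma^N$, and assumption (\ref{hyp:1.2}), the map
\begin{equation*}
\Phi(R)_t := g(t) + \int_0^t K_1(s,t) \gamma^N(R_s)\, dW_s + \int_0^t K_2(s,t) b^N(R_s)\, ds
\end{equation*}
is a contraction on a short interval $[0,\eta]$ whose length depends only on the Lipschitz constants and the limsup in (\ref{hyp:1.2}); the strict inequality $<1$ in (\ref{hyp:1.2}) is what buys the contraction. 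Patching on intervals of uniform length gives existence and pathwise uniqueness on $[0,T]$ for the truncated problem.

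\textbf{Step 3 (removal of truncation via a priori moments).} To remove $N$, I would use $\sigma_s^2 \le 3(\beta_0^2 + \beta_1^2 R_{1,s}^2 + \beta_2^2 R_{2,s})$ together with BDG applied to $R_1^N$ and a direct estimate of $R_2^N$ to obtain a coupled linear Volterra inequality of the form
\begin{equation*}
u(t) \le C_T + C_T \int_0^t \bigl(K_1(s,t)^2 + K_2(s,t)\bigr)\, u(s)\, ds, \qquad u(t) := \mathbb{E}[(R^N_{1,t})^2] + \mathbb{E}[R^N_{2,t}],
\end{equation*}
with $C_T$ independent of $N$ (using (\ref{hyp:1.3}) for the initial terms and (\ref{hyp:1.1}) for the kernel integrals). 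A Volterra--Gronwall argument then bounds $\sup_{t \le T} u(t)$ uniformly in $N$. Setting $\tau_N := \inf\{t \ge 0 : |R^N_{1,t}| \vee R^N_{2,t} \ge N\}$, Markov's inequality gives $\mathbb{P}(\tau_N \le T) \to 0$, and uniqueness of the truncated SVE ensures $R^N = R^{N'}$ on $[0,\tau_N]$ for $N'>N$, so the truncated solutions glue into a unique global continuous solution of (\ref{eq:r_dynamics}).

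\textbf{Step 4 (Hölder regularity) and main obstacle.} For $0\le t<t'\le T$ and $p$ large, I would bound $\mathbb{E}[|R_{t'}-R_t|^p]$ by splitting each of the three terms of (\ref{eq:r_dynamics}) into a \emph{changing-kernel} contribution on $[-\Delta,t]$, controlled by (\ref{hyp:1.5}) and the moment bound of Step 3, and an \emph{extra-interval} contribution on $(t,t']$, controlled via Hölder's inequality with the conjugate pair $(\alpha_1,\alpha_1^*)$ (resp.\ $(\alpha_2,\alpha_2^*)$) against assumption (\ref{hyp:1.4}) and the moment bound. This produces $\mathbb{E}[|R_{t'}-R_t|^p] \le C|t'-t|^{p\gamma^*}$ for any $\gamma^* < \min(\gamma, 1/(2\alpha_1^*), 1/\alpha_2^*)$, and Kolmogorov's continuity criterion yields the stated local Hölder property. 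The principal technical obstacle throughout is that the quadratic growth of $b_2$ and the square-root non-smoothness of $\gamma_1$ preclude direct application of classical SVE results; it is the combination of the structural lower bound $R_{2,t} \ge g_2(t) > 0$ with the fact that $\sigma_s^2$ is only \emph{linear} in $R_{2,s}$ that allows the Volterra--Gronwall estimate to close and makes the localization truly harmless, which I would expect to be the most delicate part to implement cleanly.
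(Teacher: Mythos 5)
Your overall architecture (truncate, Picard/fixed point, a priori moments, de-truncate, Hölder via Kolmogorov) is close in spirit to the paper's, and your Step~1 observation that $R_{2,t}\ge g_2(t)\ge c>0$ lets you replace $\sqrt{r_2}$ by $\sqrt{r_2\vee c}$ upfront is a clean simplification that is compatible with assumption \ref{hyp:1.6}. The paper proceeds slightly differently (it invokes the existence/uniqueness and regularity theorems of Zhang rather than running the contraction argument by hand, and it truncates to cover the small-$R_2$ regime with a stopping time rather than modifying the square root once and for all), but that is a matter of packaging.

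However, Step~3 as written has a genuine gap: you prove a \emph{pointwise} moment bound, $\sup_{t\le T}\bigl(\mathbb{E}[(R^N_{1,t})^2]+\mathbb{E}[R^N_{2,t}]\bigr)\le C_T$ uniformly in $N$, and then try to conclude $\mathbb{P}(\tau_N\le T)\to 0$ by Markov's inequality. But $\tau_N\le T$ is the event that the \emph{running supremum} of $|R^N_{1,t}|\vee R^N_{2,t}$ reaches $N$, so Markov requires control of $\mathbb{E}\bigl[\sup_{t\le T}(\,\cdot\,)\bigr]$, not $\sup_{t\le T}\mathbb{E}[\,\cdot\,]$; these are not interchangeable and the pointwise estimate alone does not yield a bound on the exceedance probability of the supremum. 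This is precisely the point where the paper brings in extra machinery: it establishes an increment estimate $\mathbb{E}[|R_{t'}-R_t|^{p}\mathbb{1}_{\{t,t'\le\hat\tau_n\wedge T\}}]\le C|t'-t|^{1+\varepsilon}$ uniformly in $n$ (its Lemma~\ref{lem:strong_bound}, which itself needs \ref{hyp:1.4} and \ref{hyp:1.5} together with a localized moment bound analogous to yours) and converts it to a \emph{supremum} moment bound via the Garsia--Rodemich--Rumsey inequality (its Lemma~\ref{lem:grr}), after which Markov is applicable.

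Concretely, your Steps~3 and~4 need to be interleaved: the kernel-increment and conjugate-Hölder estimates you defer to Step~4 are exactly what are needed, \emph{before} de-truncation, to upgrade your pointwise bounds to $\mathbb{E}[\sup_{t\le T\wedge\tau_N}\cdot]$ bounds (via GRR or Kolmogorov--Chentsov with explicit constants uniform in $N$). Without this reordering, the conclusion ``$\mathbb{P}(\tau_N\le T)\to 0$, so the truncated solutions glue into a global solution'' does not follow, and you have not actually removed the cutoff. Once this is fixed, the rest of your argument (contraction on a short interval bought by the strict inequality in \ref{hyp:1.2}, the observation that $b_2$ and $\gamma_1^2$ are at most quadratic in $R_1$ and linear in $R_2$ so that the Volterra--Gronwall estimate closes, and the final Hölder conclusion) is in line with the paper's proof.
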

Assumptions \ref{hyp:1.1}-\ref{hyp:1.3} guarantee the existence and uniqueness of a local solution to Equation (\ref{eq:r_dynamics}) while assumptions \ref{hyp:1.4}-\ref{hyp:1.5} provide the regularity properties of the solution. Finally, \ref{hyp:1.6} ensures that $R_2$ is bounded from below by a positive deterministic function which allows dealing with the fact that the square root function is non-differentiable at 0 and in turn to deduce that the local solution is actually global. This assumption is satisfied whenever $s\mapsto \sigma^2_s \inf_{t\in [0,T]}K_2(s,t)$ is positive on a subset of $[-\Delta,0]$ with positive Lebesgue measure. When $K_2(s,t)=(1-\theta_2)\lambda_{2,0}e^{-\lambda_{2,0}(t-s)}+\theta_2\lambda_{2,1}e^{-\lambda_{2,1}(t-s)}$, the assumption is equivalent to the condition we gave above ($\theta_2 R_{2,1,0}-(1-\theta_2)R_{2,0,0}^- >0$ when $\lambda_{2,0} \ge \lambda_{2,1}$) in the case of the 4-factor Markovian PDV model. Remark that, by Hölder's inequality, assumption \ref{hyp:1.4} implies assumption \ref{hyp:1.2}, but we still write assumption \ref{hyp:1.2} to be able to separate the assumptions used to show the existence and uniqueness and those used to show the regularity of the solution.

\begin{remark}
   For homogeneous kernels $K_i(s,t)=\kappa_i(t-s)$ with $\kappa_i:{\mathbb R}^*_+\to{\mathbb R}_+$ for $i\in\{1,2\}$, \ref{hyp:1.1}, \ref{hyp:1.2} and \ref{hyp:1.4} hold as soon as the function $\kappa_1^{2\alpha}+\kappa_2^{\alpha}$ is locally integrable on $(0,+\infty)$ for some $\alpha>1$ and, when $\Delta=+\infty$, $\int_0^{+\infty}\left(\kappa_1^2(u)+\kappa_2(u)\right)du<\infty$. Moreover, when $\int_{-\Delta}^0{\mathbb 1}_{\{|\beta_0+\beta_1r_{1,s}+\beta_2\sqrt{r_{2,s}} |>0\}}ds>0$, then \ref{hyp:1.6} holds as soon as $\forall T>0$, $\inf_{t\in(0,T]}\kappa_2(t)>0$. In particular these conditions hold \begin{itemize}
   \item  when $\kappa_i(u)=\sum_{j=1}^{J_i}\theta_{ij}e^{-\lambda_{ij}u}$ with $J_1,J_2\ge 1$ and positive parameters $\theta_{ij}$ and $\lambda_{ij}$ or when $\kappa_i(u)=\frac{Z_{\zeta_i,\delta_i}}{(u+\delta_i)^{\beta_i}}$ with $\zeta_i>\frac i2\mathbb{1}_{\{\Delta=+\infty\}}$ and positive parameters $\delta_i,Z_{\zeta_i,\delta_i}>0$. For these choices, \ref{hyp:1.5} also holds with $\gamma=1$. Then the solution to (\ref{eq:r_dynamics}) is locally $\gamma^*$-Hölder continuous for any $\gamma^*\in \left(0,\frac  12\right)$, the limitation coming from $\frac 1{2\alpha_1^*}$.
   \item when $\Delta<\infty$ and $\kappa_i(u)=u^{H_i-\frac i2}$ with $H_1,H_2>0$. Let us consider the singular case when $H_1\in (0,\frac 12)$ and $H_2\in (0,1)$ and focus on assumption \ref{hyp:1.5}. For $0\le t<t'$, we have
\begin{align*}
   (K_1(s,t)-K_1(s,t'))^2\le K_1(s,t)\left(K_1(s,t)-K_1(s,t')\right)=(t-s)^{{H_1}-\frac 12}(1/2-{H_1})\int_t^{t'}(r-s)^{{H_1}-\frac 32}dr
\end{align*}
We deduce that when $-\Delta\le 2t-t'$
\begin{align*}
  \frac 2{1-2{H_1}}&\int_{-\Delta}^t(K_1(s,t)-K_1(s,t'))^2ds\le \int_{r=t}^{t'}\int_{s=-\Delta}^t(t-s)^{{H_1}-\frac 12}(r-s)^{{H_1}-\frac 32}ds dr\\
  &\le \int_{r=t}^{t'}\int_{s=-\Delta}^{2t-t'}(t-s)^{2{H_1}-2}ds dr+\int_{r=t}^{t'}\int_{s=2t-t'}^t(t-s)^{{H_1}-1}(r-t)^{{H_1}-1}ds dr\\
  &\le (t'-t)\times \frac{1}{1-2{H_1}}(t'-t)^{2{H_1}-1}+\frac{(t'-t)^{2{H_1}}}{{H_1}^2}.
\end{align*}
The left-hand side being increasing with $\Delta$, it remains smaller than the right-hand side when $\Delta>2t-t'$. On the other hand,
\begin{align*}
   \int_{-\Delta}^t K_2(s,t)-K_2(s,t')ds=\int_{r=t}^{t'}\int_{s=-\Delta}^t\frac{(r-s)^{H_2-2}}{2-H_2}ds dr\le \int_{t}^{t'}(r-t)^{H_2-1} dr=\frac{(t'-t)^{H_2}}{H_2}.\end{align*}Hence \ref{hyp:1.5} holds with $\gamma=\min(H_1,H_2)$. On the other hand, \ref{hyp:1.4} holds with $\alpha_1<\frac 1{1-2H_1}$ and $\alpha_2<\frac{1}{1-H_2}$, so that $\frac 1{2\alpha_1^*}=\frac 12\left(1-\frac 1{\alpha_1}\right)<H_1$ and $\frac 1{\alpha_2^*}=1-\frac 1{\alpha_2}<H_2$. Hence the solution to (\ref{eq:r_dynamics}) is locally $\gamma^*$-Hölder continuous for any $\gamma^*\in \left(0,\min(H_1,H_2)\right)$.
   \end{itemize}
\end{remark}
In order to get the positivity of the volatility $\sigma$, we now exclude singular kernels and assume that $K_1$ and $K_2$ are defined on:
\begin{equation*}
  \bar{\Gamma}:= \{(s,t)\in \mathbb{R}\times\mathbb{R}^+ \mid s\le t \}. 
\end{equation*}
We also need the following additional assumptions.
  
\begin{enumerate}[label=(\textbf{II.\arabic*})] 
    \item For all $s\in(-\Delta,+\infty)$, $t\mapsto K_1(s,t)$ and $t\mapsto K_2(s,t)$ are continuously differentiable on $[s,+\infty)$ and the two kernels verify for all $t\ge 0$:
    \begin{equation*}
      \int_0^t K_1(u,u)^2 du+ \int_0^t \left(\int_{-\Delta}^v \left|\partial_v K_1(u,v)\right|^2 du\right)^{1/2} dv + \int_0^t K_2(u,u)du+ \int_0^t \int_{-\Delta}^v \left|\partial_v K_2(u,v)\right| dudv  < \infty. 
    \end{equation*}\label{hyp:2.1} 
    \item $K_1(s,t) = f(s)e^{h(t)}$ where $h$ is assumed to be differentiable, non-increasing and such that $\inf_{t\ge 0}h'(t)>-\infty$.\label{hyp:2.2}
    \item $\partial_t K_2(s,t)-2h'(t)K_2(s,t) \ge 0$ for all $t\ge 0$ and $s\in (-\Delta,t]$.\label{hyp:2.3}
    \item $g_2(0)>0$.  \label{hyp:2.4}
  \end{enumerate}

We can now state our second main result. 
\begin{theorem}\label{thm:main_result_2}
  Under assumptions \ref{hyp:1.1}-\ref{hyp:1.5} and \ref{hyp:2.1}-\ref{hyp:2.4}, there exists a unique continuous solution to Equation (\ref{eq:r_dynamics}). This process has the same the same Hölder regularity as in Theorem \ref{thm:main_result_1}. Moreover, the process $(\sigma_t=\beta_0+\beta_1R_{1,t}+\beta_2\sqrt{R_{2,t}})_{t\ge 0}$ is continuous and bounded from below by the process $X$ defined by:
  \begin{equation*}
    X_t=\sigma_0\exp\left(\beta_1\int_0^t K_1(s,s) dW_s + h(t)-h(0) -\frac{1}{2}\beta_1^2\int_0^tK_1(s,s)^2ds  \right)
  \end{equation*}
  and thus is positive if $\sigma_0>0$. 
\end{theorem}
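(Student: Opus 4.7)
\textbf{Step 1 (Reduction to Theorem~\ref{thm:main_result_1}).} I would first observe that \ref{hyp:2.1}--\ref{hyp:2.4} imply \ref{hyp:1.6}, so that existence, uniqueness and Hölder regularity follow directly from Theorem~\ref{thm:main_result_1}. Since $K_2$ is continuously differentiable in $t$ with integrable derivative by \ref{hyp:2.1}, and since the initial data are bounded by \ref{hyp:1.3}, the function $g_2$ is absolutely continuous on every $[0,T]$ with $g_2'(t)=\int_{-\Delta}^0 \partial_t K_2(s,t)(\beta_0+\beta_1 r_{1,s}+\beta_2\sqrt{r_{2,s}})^2\,ds$. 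Applying \ref{hyp:2.3} under the integral yields $g_2'(t)\ge 2h'(t)g_2(t)$, and Grönwall's inequality then gives the explicit lower bound $g_2(t)\ge g_2(0)e^{2(h(t)-h(0))}$, which is strictly positive on $[0,T]$ by \ref{hyp:2.4}. This yields \ref{hyp:1.6}, and Theorem~\ref{thm:main_result_1} delivers the first part of the statement.

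\textbf{Step 2 (Semimartingale decomposition of $\sigma$).} For the lower bound I would derive an Itô expansion of $\sigma$. Using the factorisation in \ref{hyp:2.2}, write $R_{1,t}=e^{h(t)}M_t$ with $M_t=\int_{-\Delta}^t f(s)\sigma_s\,dW_s$; integration by parts then gives $dR_{1,t}=h'(t)R_{1,t}\,dt+K_1(t,t)\sigma_t\,dW_t$. For $R_2$, the identity $K_2(s,t)=K_2(s,s)+\int_s^t\partial_v K_2(s,v)\,dv$ together with Fubini (both controlled by \ref{hyp:2.1}) shows that $t\mapsto R_{2,t}$ is absolutely continuous with
\[ \tfrac{d}{dt}R_{2,t}=K_2(t,t)\sigma_t^2+\int_{-\Delta}^t\partial_t K_2(s,t)\sigma_s^2\,ds\ge K_2(t,t)\sigma_t^2+2h'(t)R_{2,t}, \]
the inequality coming from \ref{hyp:2.3}. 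Since Step~1 yields $R_{2,t}\ge g_2(t)>0$, $\sqrt{R_{2,t}}$ is absolutely continuous and satisfies $\tfrac{d}{dt}\sqrt{R_{2,t}}\ge K_2(t,t)\sigma_t^2/(2\sqrt{R_{2,t}})+h'(t)\sqrt{R_{2,t}}$. Inserting these expansions into $\sigma_t=\beta_0+\beta_1 R_{1,t}+\beta_2\sqrt{R_{2,t}}$ and exploiting the algebraic identity $\beta_1 R_{1,s}+\beta_2\sqrt{R_{2,s}}=\sigma_s-\beta_0$, one obtains
\[ \sigma_t\ge\sigma_0+\int_0^t h'(s)\sigma_s\,ds+\int_0^t\beta_1 K_1(s,s)\sigma_s\,dW_s+B_t, \]
where $B_t$ collects the three integrands $-h'(s)\beta_0$, $\beta_2 K_2(s,s)\sigma_s^2/(2\sqrt{R_{2,s}})$, and $\frac{\beta_2}{2\sqrt{R_{2,s}}}\int_{-\Delta}^s(\partial_s K_2(u,s)-2h'(s)K_2(u,s))\sigma_u^2\,du$; each is nonnegative by $h'\le 0$, $\beta_0,\beta_2\ge 0$ and \ref{hyp:2.3}, so $B_t$ is non-decreasing.

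\textbf{Step 3 (Pathwise comparison).} The process $X_t$ is tailored so that $\mathcal{E}_t:=\sigma_0/X_t$ satisfies the linear SDE $d\mathcal{E}_t=\mathcal{E}_t\bigl(-\beta_1 K_1(t,t)\,dW_t-h'(t)\,dt+\beta_1^2 K_1(t,t)^2\,dt\bigr)$. Applying Itô's formula to $\mathcal{E}_t\sigma_t$, the stochastic integrals cancel, and the drift contributions $\mathcal{E}_t h'(t)\sigma_t\,dt$ and $\sigma_t\mathcal{E}_t(\beta_1^2 K_1(t,t)^2-h'(t))\,dt$ are neutralised by the cross-variation $d\langle\mathcal{E},\sigma\rangle_t=-\beta_1^2 K_1(t,t)^2\mathcal{E}_t\sigma_t\,dt$, leaving $d(\mathcal{E}_t\sigma_t)=\mathcal{E}_t\,dB_t\ge 0$. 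Hence $\mathcal{E}_t\sigma_t\ge\mathcal{E}_0\sigma_0=\sigma_0$, i.e.\ $\sigma_t\ge X_t$; the right-hand side is a positive continuous process whenever $\sigma_0>0$. The main obstacle is the careful bookkeeping in Step~2: one must combine the Itô expansion of $R_1$, the chain-rule differentiation of $\sqrt{R_2}$ (where the positivity of $R_2$ supplied by Step~1 is essential), and the inequality from \ref{hyp:2.3} in such a way that the linear-in-$\sigma$ part of the dynamics is isolated exactly as $h'(s)\sigma_s\,ds+\beta_1 K_1(s,s)\sigma_s\,dW_s$, with every leftover term being manifestly nonnegative; once this decomposition is in hand, the comparison via the explicit $\mathcal{E}_t$ is essentially automatic.
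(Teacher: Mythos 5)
Your proof is correct, and the key step (Step 3) takes a genuinely different and more elementary route than the paper. The paper establishes the Itô decomposition of $\sigma$ (via a lemma proved by stochastic Fubini) and then invokes a general Yamada--Watanabe-type comparison lemma for Itô processes, which constructs the usual smoothing sequence $\psi_n$ to accommodate a modulus-of-continuity condition on the diffusion coefficient. You instead exploit that the diffusion coefficient of $\sigma$ is exactly linear, $\gamma(s,x)=\beta_1 K_1(s,s)x$, so that multiplying by the reciprocal $\mathcal{E}_t$ of the stochastic exponential renders $\mathcal{E}_t\sigma_t$ non-decreasing pathwise, with no comparison lemma needed; this is shorter and more transparent in the linear case, whereas the paper's lemma buys generality (non-Lipschitz diffusions) that is not actually needed here. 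Your Step~1 reaches the same conclusion as the paper's Lemma~\ref{lem:condition_1.6} via a differential inequality on $g_2$ plus Grönwall (the paper bounds $K_2(s,t)$ pointwise instead), and your factorisation $R_{1,t}=e^{h(t)}M_t$ in Step~2 is a neat shortcut avoiding stochastic Fubini for $R_1$. Two small presentational points to tidy: in Step~2 the decomposition of $\sigma$ should be written as an equality,
\begin{equation*}
\sigma_t=\sigma_0+\int_0^t h'(s)\sigma_s\,ds+\int_0^t\beta_1 K_1(s,s)\sigma_s\,dW_s+B_t,
\end{equation*}
with $B$ the exact remainder (then observed to be non-decreasing by $h'\le 0$, $\beta_0,\beta_2\ge 0$ and \ref{hyp:2.3}), since Step~3 needs the equality for the product Itô formula, not the stated inequality; and in Step~3 it is cleaner to define $\mathcal{E}_t$ directly as the reciprocal of the exponential factor, rather than as $\sigma_0/X_t$ which is ill-defined when $\sigma_0=0$ even though the claimed bound $\sigma_t\ge X_t=0$ still needs to be proved in that case.
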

The assumption \ref{hyp:2.1} implies that $R_1$ and $R_2$ are Itô processes (see Lemma \ref{lem:differential_K}) while the assumptions \ref{hyp:2.2} and \ref{hyp:2.3} allow to use a comparison result for Itô processes (see Lemma \ref{lem:comparison_result}), similar to the one of \customcite[Proposition~2.18]{karatzas1991} for solutions of SDEs, which gives the positive stochastic lower bound in Theorem \ref{thm:main_result_2}. Finally, assumption \ref{hyp:2.4} combined with assumptions \ref{hyp:2.2} and \ref{hyp:2.3} implies assumption \ref{hyp:1.6} (see Lemma \ref{lem:condition_1.6}) which justifies that \ref{hyp:1.6} has not been included in the set of assumptions of Theorem \ref{thm:main_result_2}. This assumption is satisfied whenever $s\mapsto K_2(s,0)\sigma_s^2$ is positive on a subset of $[-\Delta,0]$ with positive Lebesgue measure. We provide below two examples of kernels $K_1$ satisfying both assumptions \ref{hyp:2.1} and \ref{hyp:2.2}. 

\begin{example}
  The exponential kernel $K_1(s,t)=\lambda e^{-\lambda (t-s)}$ with $\lambda>0$ is recovered by taking $f(s)=\lambda e^{\lambda s}$ and $h(t)=-\lambda t$.
\end{example}
\begin{example}
  By considering $f(s)=(s+\Delta)^{a}\mathbb{1}_{\{s\ge -\Delta\}}$ and $h(t)=-a\log (t+\Delta)$ for $a>0$ and $\Delta<\infty$, one obtains the non-convolutional kernel $K_1(s,t)=\left(\frac{s+\Delta}{t+\Delta}\right)^a\mathbb{1}_{\{s\ge -\Delta\}}$ which allows modeling long memory.   
\end{example}

Let us make two remarks regarding assumption \ref{hyp:2.3}. 

\begin{remark}\label{rk:nv_condition_equivalence}
   Let $K_1(u,t) = \lambda_1e^{-\lambda_1(t-u)}$. When $K_2(u,t) = \lambda_2e^{-\lambda_2(t-u)}$, condition \ref{hyp:2.3} is equivalent to $2\lambda_1 \ge \lambda_2$ so we recover the condition of \cite{nutz2023guyon}.\\ When $K_2(s,t)=(1-\theta_2)\lambda_{2,0}e^{-\lambda_{2,0}(t-s)}+{\theta_2}\lambda_{2,1}e^{-\lambda_{2,1}(t-s)}$, $(\textbf{II.3})$ is satisfied if and only if $$2\lambda_1\ge \lambda_{2,0}\frac{(1-\theta_2)\lambda_{2,0}}{(1-\theta_2)\lambda_{2,0}+\theta_2\lambda_{2,1}}+\lambda_{2,1}\frac{\theta_2\lambda_{2,1}}{(1-\theta_2)\lambda_{2,0}+\theta_2\lambda_{2,1}}.$$ The necessary condition is obtained by choosing $s=t$. The sufficient condition follows from the fact that the weight of the largest coefficient between $\lambda_{2,0}$ and $\lambda_{2,1}$ in the convex combination $$\frac{-\partial_t K_2(s,t)}{K_2(s,t)}=\lambda_{2,0}\frac{(1-\theta_2)\lambda_{2,0}e^{-\lambda_{2,0}(t-s)}}{(1-\theta_2)\lambda_{2,0}e^{-\lambda_{2,0}(t-s)}+\theta_2\lambda_{2,1}e^{-\lambda_{2,1}(t-s)}}+\lambda_{2,1}\frac{\theta_2\lambda_{2,1}e^{-\lambda_{2,1}(t-s)}}{(1-\theta_2)\lambda_{2,0}e^{-\lambda_{2,0}(t-s)}+\theta_2\lambda_{2,1}e^{-\lambda_{2,1}(t-s)}}$$ is non-increasing in $t-s$.
\end{remark}

\begin{remark}\label{rk:cutoff}
  Since the condition \ref{hyp:2.4} and the definition of $g_2(0)$ imply that $(-\Delta,0]\ni s\mapsto K_2(s,0)$ is positive on a set with positive Lebesgue measure and for $s\le 0\le t$, $K_2(s,t)\ge K_2(s,0)\exp\left(2\int_0^t h'(u)du\right)$, kernels vanishing for $t-s$ greater than some finite cutoff lag are excluded from our set of assumptions. This can be viewed as a limitation given that \cite{guyon2023} truncate the two kernels after some cut-off lag in their numerical experiments. However, we will show in the following that the $R^2$ scores achieved by the calibration of the PDV model (\ref{eq:empirical_model}) for several choices of kernels are still high (although a bit lower than those reported by \citeauthor{guyon2023}, \citeyear{guyon2023}) when the kernels are not truncated. 
\end{remark}

In their empirical study, \cite{guyon2023} choose to work with time-shifted power-law (TSPL) kernels for the two kernels $K_1$ and $K_2$:
\begin{equation*}
  K_i(u,t) = \frac{Z_{\zeta_i,\delta_i}}{(t-u+\delta_i)^{\zeta_i}}, \quad i=1,2,  
\end{equation*}
with $\zeta_i,\delta_i>0$ and $Z_{\zeta_i,\delta_i}$ a normalization constant. This choice is motivated by the ability of this parametric form to mix short and long memory, which is a characteristic of historical volatility data. However, by considering a TSPL kernel for $K_1$, assumption \ref{hyp:2.2} is not satisfied so we cannot apply Theorem \ref{thm:main_result_2}. Therefore, we propose to consider an exponential kernel for $K_1$ and a TSPL kernel for $K_2$, i.e.  $K_1(u,t) = \lambda e^{-\lambda(t-u)}$ and $K_2(u,t) = \frac{Z_{\zeta,\delta}}{(t-u+\delta)^{\zeta}}$. Assuming that $\zeta>1$ and $2\lambda \delta \ge \zeta$ (corresponding to condition \ref{hyp:2.3}), one can check that all the assumptions of Theorem \ref{thm:main_result_2} are satisfied. Note that if $\Delta<+\infty$, we can have $0<\zeta\le 1$. This alternative choice for $K_1$ and $K_2$ implying a guarantee of positivity for the volatility process, it is natural to compare its performance to the one of \cite{guyon2023}, that is two TSPL kernels, as well as two other options:
\begin{enumerate}
  \item the 2-factor Markovian PDV model (corresponding to exponential kernels for both $K_1$ and $K_2$) for which \cite{nutz2023guyon} gave a condition guaranteeing the positivity of the volatility;
  \item the 4-factor Markovian PDV model (corresponding to a convex combination of two exponential kernels), which was the option recommended by \cite{guyon2023} for applications (e.g. solving the joint SPX/VIX smile calibration problem) since it is Markovian and a convex combination of two exponential kernels provides a good approximation of a TSPL kernel. 
\end{enumerate}

To perform this comparison, we calibrate model (\ref{eq:empirical_model}) on implied volatility indices (the VIX for the S\&P 500, the VSTOXX for the Euro Stoxx 50 and the IVI for the FTSE 100), extracted from Refinitiv\footnote{\url{www.lseg.com}}, and realized volatilities computed using 5-minute returns per trading day coming from \cite{heber2009oxford}. The data of the corresponding indices is also extracted from Refinitiv. Note that the data is split into a train set spanning the period from January 1, 2000 to December 31, 2018 and a test set spanning the period from January 1, 2019 to May 15, 2022 (resp. to December 30, 2021) for the implied volatility indices (resp. for the realized volatilities). The calibration methodology is essentially the one described in \cite{guyon2023} but we make some modifications. First, unlike Guyon and Lekeufack who truncate the sums in the formulas of $R_1$ and $R_2$ in Equation (\ref{eq:empirical_model}) to 1,000 business days, we use all the past index data (which goes back to January 2, 1980 for the S\&P 500, December 31, 1986 for the Euro Stoxx 50 and January 3, 1984 for the FTSE 100). This is motivated by Remark \ref{rk:cutoff} which shows that kernels that vanish for $t-s>C$ for some cut-off lag $C>0$ do not satisfy the assumptions of Theorem \ref{thm:main_result_2}. Second, we bound the variables $\beta_0$ and $\beta_2$ from below by 0 and the variable $\beta_1$ from above by $0$ in the numerical optimization as this allows to stabilize the results for the VIX. For the same reason, we add an $L^2$ penalization in the objective function following \cite{andres2023implied} in the case of the two TSPL kernels. The resulting $R^2$ scores are presented in Table \ref{tab:pdv_model_calibration}. We observe that the compared choices for $K_1$ and $K_2$ provide very close results both on the train and the test sets. The best model in terms of $R^2$ scores is the convex combination of two exponential kernels for $K_1$ and $K_2$ but this is also the choice with the largest number of parameters (there is three parameters per kernel versus one for the exponential kernel and two for the TSPL kernel). Besides, there is no condition guaranteeing the positivity of the volatility available so far in this case (\citeauthor{nutz2023guyon}, \citeyear{nutz2023guyon}, actually show that the volatility can be negative with positive probability if one considers the analogue for the 4-factor PDV model of the positivity condition in the 2-factor PDV model). The same remark applies to the choice of two TSPL kernels which is the second best model on the train set. Finally, the choice of two exponential kernels is the one with the worst scores. In a talk given at Columbia University in May 2024, Julien Guyon has presented an analogous comparison of model $(2,2)$ with two convex combinations of two exponential kernels for $K_1$ and $K_2$, model $(1,2)$ with one exponential kernel for $K_1$ and a convex combination of two for $K_2$, model $(2,1)$ with a convex combination of two for $K_1$ and one for $K_2$ and finally model $(1,1)$ with one for $K_1$ and one for $K_2$. He showed that model $(1,2)$ performs slightly worse than model $(2,2)$ and better than model $(2,1)$ which itself is slightly better than model $(1,1)$. These results show that, in view of applications, the choice of an exponential kernel for $K_1$ and a TSPL kernel for $K_2$ is an attractive alternative to the choice of two TSPL kernels or two convex combinations of two exponential kernels since it does not deteriorate the fit to historical data, the model is parsimonious in terms of number of parameters and there is a sufficient condition ($2\lambda \delta \ge \zeta$) guaranteeing the positivity of the volatility. Moreover, this sufficient condition is satisfied numerically by the calibrated parameters as shown in Table \ref{tab:positivity_condition}.

\begin{table}[h]
  \centering
  \caption{$R^2$ scores of model (\ref{eq:empirical_model}) for various choices of kernels $K_1$ and $K_2$ }
  \label{tab:pdv_model_calibration}
  \setlength{\extrarowheight}{10pt}
  \begin{tabularx}{\textwidth}{clYYYYYY}
  \toprule
        & & \multicolumn{3}{c}{Implied volatility index} & \multicolumn{3}{c}{Realized volatility} \\ \cmidrule(lr){3-5} \cmidrule(l){6-8}
        & & VIX & VSTOXX & IVI & SPX & STOXX & FTSE \\ \midrule
        \multirow{2}{*}{$K_1$ and $K_2$ are TSPL kernels} &$R^2$ train    & 89.34\% & 91.36\% & 92.13\% & 67.03\% & 58.86\% & 61.74\% \\
        & $R^2$ test & 83.43\% & 90.91\% & 87.97\% & 68.84\% & 62.48\% & 62.72\% \\ \midrule
        \multirow{2}{*}{\makecell{$K_1$ and $K_2$ are exponential \\ kernels}} & $R^2$ train & 89.10\% & 90.28\% & 90.50\% & 66.99\% & 58.53\% & 61.54\% \\
        &$R^2$ test & 74.26\%& 86.43\% & 82.37\% & 63.41\% & 61.68\% & 62.48\% \\\midrule
        \multirow{2}{*}{\makecell{$K_1$ and $K_2$ are convex   \\ combinations of two \\exponential kernels}} & $R^2$ train &  90.01\% & 91.99\% & 92.37\% & 67.55\% & 59.10\% & 62.14\% \\
        & $R^2$ test  & 83.01\% & 91.61\%& 88.74\% & 68.86\%& 62.68\% & 62.80\%   \\\midrule
        \multirow{2}{*}{\makecell{$K_1$ is an exponential kernel\\ and $K_2$ is a TSPL kernel}} & $R^2$ train & 89.86\% & 91.76\% & 92.29\% & 67.12\% &  58.85\% & 61.74\% \\
        & $R^2$ test & 81.33\% & 91.03\% & 87.67\% & 64.61\% & 62.78\% & 62.88\% \\
  \bottomrule
  \end{tabularx}
  \end{table}

  \begin{table}[h]
    \centering
    \caption{Value of $\frac{2\lambda\delta}{\zeta}$ when we calibrate model (\ref{eq:empirical_model}) with an exponential kernel for $K_1$ and a TSPL kernel for $K_2$.}
    \label{tab:positivity_condition}
    \begin{tabular}{@{}ccccccc@{}}
    \toprule
     & VIX  & VSTOXX & IVI  & SPX  & STOXX & FTSE \\ \midrule
    $\frac{2\lambda\delta}{\zeta}$ & 4.00 & 2.84   & 1.62 & 2.23 & 1.97  & 1.97 \\ \bottomrule
    \end{tabular}
  \end{table}

The sequel is organized as follows. First, we introduce two theorems from \cite{zhang2010stochastic} regarding the existence and uniqueness of continuous solutions to a Stochastic Volterra Equation. In the next section, we prove Theorem \ref{thm:main_result_1}. The last section is dedicated to the proof of Theorem \ref{thm:main_result_2}. 

\section{Preliminary results}
We start by stating an existence and uniqueness result for a stochastic Volterra integral equation with (possibly) singular kernels and Lipschitz coefficients. This result is a simplified version of Theorem 3.1 of \cite{zhang2010stochastic} who works in a 2-smooth Banach space while we work in $\mathbb{R}^d$. We are interested in the following stochastic Volterra integral equation in $\mathbb{R}^d$:\\
\begin{equation}\label{eq:volterra}
  X_t = g(t) + \int_0^t K_1(s,t)\gamma(X_s)dW_s + \int_0^t K_2(s,t)b(X_s)ds
\end{equation}
where $(g(t))_{t\ge 0}$ is an $\mathbb{R}^d$-valued measurable and $(\mathcal{F}_t)$-adapted process, $K_1$ and $K_2$ are two kernels from $\Gamma$ to $\mathbb{R}_+$, $b$ is a function from $\mathbb{R}^d$ to $\mathbb{R}^d$, $\gamma$ is a function from $\mathbb{R}^d$ to $\mathbb{R}^d\times \mathbb{R}^q$ and $W$ is a $(\mathcal{F}_t)$-adapted $q$-dimensional Brownian motion.

\begin{theorem}\label{thm:global_existence}
  Assume that \ref{hyp:1.1} and \ref{hyp:1.2} as well as the following conditions hold: 
  \begin{enumerate}[label=(\roman*)]
    \item There exists $p\ge 2$ such that, for any $T>0$, 
    \begin{equation*}
      \sup_{t\in [0,T]} \mathbb{E}[\|g(t)\|^p] < \infty. 
    \end{equation*}\label{hyp:global_thm_1}
    \item There exists $C>0$ such that for all $x,y\in \mathbb{R}^d$,
    \begin{equation*}
      \|b(x)-b(y)\|+\|\gamma(x)-\gamma(y)\| \le C\|x-y\|.
    \end{equation*}\label{hyp:global_thm_2}
  \end{enumerate}
  Then, there exists a unique (up to $dt$ a.e. equality) $(\mathcal{F}_t)$-adapted process $(X_t)_{t\ge 0}$ such that:
  \begin{itemize}
    \item Equation (\ref{eq:volterra}) holds dt a.e. and
    \item $\forall T>0,\; \esssup_{t\in[0,T]} \mathbb{E}\left[\|X_t\|^p \right] < \infty. $
  \end{itemize}
\end{theorem}
\begin{remark}
  The conclusions of Theorem \ref{thm:global_existence} still hold if the integral is taken from 0 to $t$ in assumption \ref{hyp:1.1} instead of from $-\Delta$ to $t$. 
\end{remark}

Under additional conditions on the kernels $K_1$ and $K_2$, one can show that the solution to Equation (\ref{eq:volterra}) is continuous using the Kolmogorov-Centsov theorem. The following theorem is an adaptation of Theorem 3.3 in \cite{zhang2010stochastic}. 
\begin{theorem}\label{thm:continuity}
  Assume that the conditions of Theorem \ref{thm:global_existence} as well as conditions \ref{hyp:1.4} and \ref{hyp:1.5} hold. If $g$ is almost surely continuous and satisfies for any $T>0$ and any $p\ge 2$:
    \begin{equation*}
      \sup_{t\in [0,T]} \mathbb{E}\left[\|g(t)\|^p \right] <\infty,
    \end{equation*}
    then there exists a unique continuous solution to Equation (\ref{eq:volterra}). If, moreover,
\begin{equation*}
    \exists\delta>0,\;\forall p\ge 2,\;\forall T>0,\;\exists C_{p,T}<\infty,\;\mathbb{E}\left[\|g(t')-g(t)\|^p \right] \le C_{p,T} |t'-t|^{\delta p}\mbox{ for }0\le t < t' \le T,
  \end{equation*}
  then the process $(X_t)_{t\ge 0}$ is locally $\gamma^*$-Hölder continuous for any $\gamma^*\in \left(0, \min\left(\frac{1}{2\alpha_1^*},\frac{1}{\alpha_2^*},\gamma, \delta \right)\right)$ with $\alpha_i^* = \frac{\alpha_i}{\alpha_i-1}$. 
\end{theorem}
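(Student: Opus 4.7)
The plan is to apply Theorem \ref{thm:global_existence} at every $p \ge 2$: since $g$ has bounded $p$-th moments for every $p \ge 2$, this yields a unique $(\mathcal{F}_t)$-adapted solution $X$ with $\sup_{t \in [0,T]} \mathbb{E}[\|X_t\|^p] < \infty$ for every $p \ge 2$, which in turn controls $\mathbb{E}[\|\gamma(X_s)\|^p]$ and $\mathbb{E}[\|b(X_s)\|^p]$ through the Lipschitz growth of $\gamma$ and $b$. The whole statement then reduces to a quantitative moment estimate on increments of $X$, after which the Kolmogorov-Centsov theorem extracts both a continuous modification and the announced Hölder regularity.

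\textbf{Key estimate.} I decompose
\begin{align*}
X_{t'} - X_t = {}& [g(t') - g(t)] + \int_t^{t'} K_1(s,t')\gamma(X_s)\,dW_s + \int_0^t [K_1(s,t') - K_1(s,t)]\gamma(X_s)\,dW_s \\
& + \int_t^{t'} K_2(s,t') b(X_s)\,ds + \int_0^t [K_2(s,t') - K_2(s,t)] b(X_s)\,ds
\end{align*}
and bound the four non-$g$ terms in $L^p$. On each stochastic integral I apply Burkholder-Davis-Gundy and then, instead of pulling out a sup-norm, use Jensen's inequality against the kernel-weighted probability measure (namely $K_1(s,\cdot)^2 ds/M$ for the near-diagonal piece and $[K_1(s,t')-K_1(s,t)]^2 ds/M$ for the kernel-increment piece, with $M$ the corresponding total mass) to move the exponent $p/2$ inside the integral. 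After taking expectation this yields $C M^{p/2}$. Hölder's inequality with exponents $\alpha_1$ and $\alpha_1^*$ combined with \ref{hyp:1.4} bounds the near-diagonal mass by $C(t'-t)^{1/\alpha_1^*}$, while \ref{hyp:1.5} bounds the kernel-increment mass by $C(t'-t)^{2\gamma}$, producing contributions of order $(t'-t)^{p/(2\alpha_1^*)}$ and $(t'-t)^{p\gamma}$ respectively. The two drift integrals are treated analogously, with Jensen against $K_2(s,\cdot)\,ds$ and $|K_2(s,t')-K_2(s,t)|\,ds$, Hölder with $\alpha_2, \alpha_2^*$, and \ref{hyp:1.5}, giving $(t'-t)^{p/\alpha_2^*}$ and $(t'-t)^{p\gamma}$. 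Summing,
\[\mathbb{E}[\|X_{t'} - X_t\|^p] \le C\left((t'-t)^{p/(2\alpha_1^*)} + (t'-t)^{p/\alpha_2^*} + (t'-t)^{p\gamma} + \mathbb{E}[\|g(t')-g(t)\|^p]\right).\]

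\textbf{Conclusion from the estimate.} For the first (continuity) claim I do not need any quantitative modulus on $g$: applying the estimate above to the two integral processes $t \mapsto \int_0^t K_1(s,t)\gamma(X_s)\,dW_s$ and $t \mapsto \int_0^t K_2(s,t) b(X_s)\,ds$ separately produces moment bounds with strictly positive exponents (the $g$ term drops out), so Kolmogorov-Centsov yields a continuous modification of each. Combined with the almost sure continuity of $g$, this gives a continuous modification $\tilde X$ of $X$; uniqueness among continuous adapted solutions follows, via localization, from the $L^p$-uniqueness in Theorem \ref{thm:global_existence}. For the Hölder claim I substitute $\mathbb{E}[\|g(t')-g(t)\|^p] \le C|t'-t|^{\delta p}$ into the display to obtain $\mathbb{E}[\|X_{t'}-X_t\|^p] \le C|t'-t|^{p\eta^*}$ with $\eta^* = \min(1/(2\alpha_1^*), 1/\alpha_2^*, \gamma, \delta)$; Kolmogorov-Centsov then yields local $\gamma^*$-Hölder continuity for any $\gamma^* < \eta^* - 1/p$, and letting $p \to \infty$ covers the full range.

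\textbf{Main obstacle.} The delicate step is the near-diagonal stochastic integral $\int_t^{t'} K_1(s,t')\gamma(X_s)\,dW_s$. A naive Hölder inside BDG with exponents $\alpha_1, \alpha_1^*$ leaves $\int_t^{t'} \|\gamma(X_s)\|^{2\alpha_1^*}\,ds$, whose $(p/2)$-th power only translates into the expected scaling in $|t'-t|$ if one extracts a sup-norm of $\gamma(X)$, and that sup-norm is then awkward to control in expectation. The Jensen trick against the kernel-weighted measure is what bypasses the sup and restores the clean exponent $1/(2\alpha_1^*)$; it is the technical heart of the argument.
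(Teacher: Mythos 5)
The paper does not give its own proof of Theorem~\ref{thm:continuity}; it is stated as an adaptation of Theorem~3.3 in \cite{zhang2010stochastic} and cited without proof. Your argument is therefore best assessed on its own terms, and it is essentially correct: taking $p$ arbitrarily large in Theorem~\ref{thm:global_existence} to get uniform moment bounds of every order, decomposing $X_{t'}-X_t$ into the $g$-increment, the two near-diagonal pieces $\int_t^{t'}$, and the two kernel-increment pieces $\int_0^t[K_i(s,t')-K_i(s,t)]\cdots$, controlling each in $L^p$ via BDG plus \ref{hyp:1.4}--\ref{hyp:1.5}, and then invoking Kolmogorov--Centsov, is exactly the right structure. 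Your Jensen step against the normalized kernel-weighted measure is legitimate (the total mass $M=\int_t^{t'}K_1(s,t')^2ds$ is deterministic, finite by \ref{hyp:1.1}, and if $M=0$ the term vanishes trivially) and yields $M^{p/2-1}\int_t^{t'}K_1(s,t')^2\mathbb{E}[\|\gamma(X_s)\|^p]ds\le C M^{p/2}$, after which Hölder with $(\alpha_1,\alpha_1^*)$ and \ref{hyp:1.4} gives $M\le C(t'-t)^{1/\alpha_1^*}$.

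One remark on your "main obstacle" paragraph: the difficulty you describe with the naive Hölder is not actually there. After Hölder with $(\alpha_1,\alpha_1^*)$ you are left with $\mathbb{E}\bigl[(\int_t^{t'}\|\gamma(X_s)\|^{2\alpha_1^*}ds)^{p/(2\alpha_1^*)}\bigr]$, and for $p\ge 2\alpha_1^*$ a second application of Hölder (equivalently Jensen against the uniform measure on $[t,t']$) moves the exponent inside and produces $(t'-t)^{p/(2\alpha_1^*)-1}\int_t^{t'}\mathbb{E}[\|\gamma(X_s)\|^p]ds$, with no pathwise sup-norm ever extracted. This is in fact precisely the technique the paper uses in the bound for $I_3$ in the proof of Lemma~\ref{lem:strong_bound}. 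Your kernel-weighted Jensen is an equivalent reorganization that happens to work for all $p\ge 2$ rather than $p\ge 2\alpha_1^*$, which is a mild convenience but not a necessity since you let $p\to\infty$ anyway. Finally, the uniqueness claim "among continuous adapted solutions, via localization" deserves a line of care because stopping does not preserve the Volterra structure; the cleanest route is to argue that any continuous adapted solution is locally bounded, hence has locally bounded moments, and then appeal to the $L^2$-uniqueness argument with stopping times as in Step~1 and Step~3 of the paper's proof of Theorem~\ref{thm:main_result_1}. These are presentation points; the proof is sound.
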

\begin{remark}
  The conclusion of Theorem \ref{thm:continuity} still holds if the integral is taken from 0 to $t$ in assumption \ref{hyp:1.5} instead of from $-\Delta$ to $t$.
\end{remark}


\section{Proof of Theorem \ref{thm:main_result_1}} 

The existence and uniqueness theorem (Theorem \ref{thm:global_existence}) cannot be used as is for model (\ref{eq:r_dynamics}) since the global Lipschitz condition is not satisfied because of the square root and the square. However, by using a localization technique, we can return to the assumptions of Theorem \ref{thm:global_existence} and show the existence of a solution up to some stopping time.

\paragraph{Step 1 of the proof of Theorem \ref{thm:main_result_1}: existence of a local solution.}

For $n\in \mathbb{N}^*$, we define, 
  \begin{equation*}
    \begin{aligned}
      b_n(r_1,r_2) &= \begin{pmatrix}
        0 \\
        \left(-n\vee \left(\beta_0+\beta_1r_1+\beta_2\sqrt{r_2\vee \frac{1}{n}}\right)\wedge n \right)\left(\beta_0+\beta_1r_1+\beta_2\sqrt{r_2\vee \frac{1}{n}} \right)
      \end{pmatrix},\\
      \gamma_n(r_1,r_2) &= \begin{pmatrix}
        \beta_0+\beta_1r_1+\beta_2\sqrt{r_2\vee \frac{1}{n}}\\
        0
      \end{pmatrix}.
    \end{aligned}
  \end{equation*}
  One can easily check that $b_n$ and $\gamma_n$ are globally Lipschitz.
Let us verify that $\sup_{t\in[0,T]}\mathbb{E}[\|g(t)\|^p]<\infty$ for all $p\ge 2$. Since the initial conditions $(r_{1,s})_{-\Delta\le s \le 0}$ and $(r_{2,s})_{-\Delta\le s \le 0}$ are deterministic, we have for all $p\ge 2$:
  \begin{equation*}
    \begin{aligned}
    \mathbb{E}\left[|g_1(t)|^p \right] &= \mathbb{E}\left[\left|G \right|^p \right] \left(\int_{-\Delta}^0K_1(s,t)^2 \left(\beta_0+\beta_1r_{1,s}+\beta_2\sqrt{r_{2,s}}\right)^2 ds  \right)^{\frac{p}{2}} \\
    &\le \mathbb{E}\left[\left|G \right|^p \right] \sup_{s\in (-\Delta,0]} \left|\beta_0+\beta_1 r_{1,s}+\beta_2 \sqrt{r_{2,s}} \right|^p \left( \int_{-\Delta}^0 K_1(s,t)^2ds \right)^{\frac{p}{2}}  \\ 
    \end{aligned}
  \end{equation*}
  where $G$ is a standard normal random variable. We deduce that $\mathbb{E}\left[|g_1(t)|^p \right]$ is finite by assumptions \ref{hyp:1.1} and \ref{hyp:1.3}. Moreover, still by assumptions \ref{hyp:1.1} and \ref{hyp:1.3}, we have for all $p\ge 2$:
  \begin{equation*}
    \mathbb{E}\left[|g_2(t)|^p \right] \le \sup_{s\in (-\Delta,0]} \left|\beta_0+\beta_1 r_{1,s}+\beta_2 \sqrt{r_{2,s}} \right|^{2p} \left( \int_{-\Delta}^0 K_2(s,t)ds \right)^{p} < \infty.
  \end{equation*}

Thus, Theorem \ref{thm:global_existence} guarantees the existence of a unique solution to the following equation :
  \begin{equation}\label{eq:local_solution}
    R_t^{(n)} = g(t) + \int_0^t K_2(s,t)b_n\left(R_s^{(n)}\right)ds + \int_0^t K_1(s,t) \gamma_n\left(R_s^{(n)}\right)dW_s, \quad \text{$dt$ a.e. on $[0,T]$}
  \end{equation}
verifying for all $p\ge 2$:
\begin{equation}\label{eq:boundedness_Rn}
  \esssup_{t\in[0,T]} \mathbb{E}\left[\left\|R_t^{(n)}\right\|^p \right] < \infty.
\end{equation} 
Moreover, using assumptions \ref{hyp:1.3} and \ref{hyp:1.5}, we have for any $p>0$ and $0\le t < t' \le T$: 
\begin{equation*}
  \begin{aligned}
    \mathbb{E}\left[\|g(t')-g(t)\|^p \right] & \le  C(|g_1(t')-g_1(t)|^p + |g_2(t')-g_2(t)|^p) \\
    &\le C \Biggl(\mathbb{E}\left[\left|\int_{-\Delta}^0 (K_1(s,t')-K_1(s,t))^2(\beta_0+\beta_1 r_{1,s}+\beta_2 \sqrt{r_{2,s}})^2 ds \right|^{\frac{p}{2}} \right]\\
    &+\mathbb{E}\left[\left|\int_{-\Delta}^0 (K_2(s,t')-K_2(s,t))(\beta_0+\beta_1 r_{1,s}+\beta_2 \sqrt{r_{2,s}})^2 ds \right|^p \right] \Biggr) \\
    &\le C |t'-t|^{\gamma p }
  \end{aligned}
\end{equation*}
where $C$ is a constant that can change from line to line. We deduce that the solution to Equation (\ref{eq:local_solution}) is locally $\gamma^*$-Hölder continuous for any $\gamma^*\in \left(0,\min\left(\frac{1}{2\alpha_1^*},\frac{1}{\alpha_2^*},\gamma\right) \right)$ by Theorem \ref{thm:continuity}. \\

We now show that if $\tau$ is a stopping time, then the uniqueness of the solution to (\ref{eq:local_solution}) holds on $[0,\tau)$. Let $R^{(n)}$ and $\tilde{R}^{(n)}$ be two solutions of (\ref{eq:local_solution}) on $[0,\tau)$ and set $Z^{(n)}_t = R^{(n)}_t-\tilde{R}^{(n)}_t$. We have:
\begin{equation*}
  \begin{aligned}
    \mathbb{E}\left[\|Z_t^{(n)} \mathbb{1}_{\{t <\tau \}}\|^2 \right] \le 2\Biggl(&\mathbb{E}\left[\left\|\int_0^{t\wedge \tau}K_2(s,t)\left(b_n\left(R_s^{(n)}\right)-b_n\left(\tilde{R}_s^{(n)}\right)\right)ds \right\|^2 \right]\\
    &+\mathbb{E}\left[\left\|\int_0^{t\wedge \tau} K_1(s,t)\left(\gamma_n\left(R_s^{(n)}\right)-\gamma_n\left(\tilde{R}_s^{(n)}\right)\right)dW_s \right\|^2 \right]\Biggr).
  \end{aligned}
\end{equation*}
By the Cauchy-Schwarz inequality, the Lipschitz property of $b_n$, assumption \ref{hyp:1.1} and Fubini's theorem for the first term, we obtain:
\begin{equation*}
  \begin{aligned}
  &\mathbb{E}\left[\left\|\int_0^{t\wedge \tau}K_2(s,t)\left(b_n\left(R_s^{(n)}\right)-b_n\left(\tilde{R}_s^{(n)}\right)\right)ds \right\|^2 \right] \\
  & \le C \int_0^{t} K_2(s,t) ds \mathbb{E}\left[\int_0^{t\wedge \tau} K_2(s,t)\left\|b_n\left(R_s^{(n)}\right)-b_n\left(\tilde{R}_s^{(n)}\right)\right\|^2 ds\right] \\
   &\le C \int_0^t K_2(s,t) \mathbb{E}\left[\left\|Z_s^{(n)}\mathbb{1}_{\{s < \tau\}} \right\|^2 \right]ds. 
  \end{aligned}
\end{equation*}
Similarly, by Doob's inequality for the martingale $\left(\int_0^r K_1(s,t)\left(\gamma_n\left(R_s^{(n)}\right)-\gamma_n\left(\tilde{R}_s^{(n)}\right)\right)dW_s   \right)_{r\in [0,t]}$, the Lipschitz property of $\gamma_n$ and Fubini's theorem, we get:
\begin{equation*}
  \begin{aligned}
    &\mathbb{E}\left[\left\|\int_0^{t\wedge \tau} K_1(s,t)\left(\gamma_n\left(R_s^{(n)}\right)-\gamma_n\left(\tilde{R}_s^{(n)}\right)\right)dW_s \right\|^2 \right] \\  
    &\le C \int_0^t K_1(s,t)^2   \mathbb{E}\left[\left\|Z_s^{(n)}\mathbb{1}_{\{s < \tau\}} \right\|^2 \right] ds .                              
  \end{aligned}
\end{equation*}
Thus, 
\begin{equation*}
  \mathbb{E}\left[\left\|Z_t^{(n)} \mathbb{1}_{\{t <\tau \}}\right\|^2 \right] \le C \int_0^t \left(K_2(s,t)+K_1(s,t)^2\right)\mathbb{E}\left[\left\|Z_s^{(n)}\mathbb{1}_{\{s<\tau \}}\right\|^2 \right] ds. 
\end{equation*}
By assumption \ref{hyp:1.2}, the generalised Grönwall lemma of \cite{zhang2010stochastic} (Lemma 2.2) can be applied which gives for $dt$ a.a. $t\in[0,T]$:

\begin{equation*}
  \mathbb{E}\left[\|Z_t^{(n)} \mathbb{1}_{\{t <\tau \}}\|^2 \right]  = 0
\end{equation*}
which, by the continuity of $R^{(n)}$ and $\tilde{R}^{(n)}$, implies that $R^{(n)}|_{[0,\tau)}=\tilde{R}^{(n)}|_{[0,\tau)}$. \\

At this stage, we can construct a solution $R$ to Equation (\ref{eq:r_dynamics}) up to a positive stopping time $\tau$. For $n\in{\mathbb N}^*$, we define the stopping time $\tau_n$ by
\begin{equation*}
  \tau_n:= \tau_n^1 \wedge \tau_n^2 
\end{equation*}
where: 
\begin{equation*}
  \begin{aligned}
    \tau_n^1&= \inf\left\{t\ge 0: \left|\beta_0+\beta_1R^{(n)}_{1,t}+\beta_2\sqrt{R^{(n)}_{2,t}}\right| \ge n \right\}, \\
    \tau_n^2&= \inf\left\{t\ge 0: R^{(n)}_{2,t}\le 1/n \right\}.
  \end{aligned}
\end{equation*}
Since $R^{(n)}$ and $R^{(n+1)}$ solve the same equation with coefficients $(b_{n+1},\gamma_{n+1})$ on $[0,\tau_n)$, we deduce from the above claim that $R^{(n)}|_{[0,\tau_n)}=R^{(n+1)}|_{[0,\tau_n)}$. Hence, $\tau_n \le \tau_{n+1}$ a.s. and, setting $\tau=\lim_{n\to +\infty}\tau_n$ and $\tau_0=0$, the process $\left(R_t:=\sum_{n\in{\mathbb N}^*}R^{(n)}_t\mathbb{1}_{\{\tau_{n-1}\le t<\tau_n\}}\right)_{t\in [0,\tau)}$ is a solution of Equation \eqref{eq:r_dynamics} up to $\tau$.
 By continuity of both processes $R^{(n)}$ and $R^{(n+1)}$, 
$R^{(n)}_{\tau_n} = R^{(n+1)}_{\tau_n}$
when $\tau_n$ is finite so that $R$ is locally $\gamma^*$-Hölder continuous on $[0,\tau)$ for any $\gamma^*\in \left(0,\min\left(\frac{1}{2\alpha_1^*},\frac{1}{\alpha_2^*},\gamma\right)\right)$. 

At this stage, we would like to show that $\mathbb{P}(\tau < \infty)=0$, i.e. that the constructed solution is global. This relies on the following lemma, the proof of which is postponed after Step 3: 
\begin{lemma}\label{lem:strong_bound}
  Under the assumptions \ref{hyp:1.1}-\ref{hyp:1.5}, we have for all $T>0$ and $p\ge 0$ :
  \begin{equation*}
    \sup_{n \in \mathbb{N}^*} \mathbb{E}\left[ \sup_{t\in [0,\hat{\tau}_n\wedge T]} (R_{1,t}^{2p}+R_{2,t}^{p}) \right] \le C_{T,p}
  \end{equation*}
  where $C_{T,p}$ is a constant depending only on $T$ and $p$ while $\hat{\tau}_n:= \hat{\tau}_n^1 \wedge \hat{\tau}_n^2$ with: 
  \begin{equation*}
    \begin{aligned}
      \hat{\tau}_n^1 &= \inf\left\{t\ge 0: \left|\beta_0+\beta_1R_{1,t}+\beta_2\sqrt{R_{2,t}} \right| \ge n \right\}, \\
      \hat{\tau}_n^2 &= \inf\left\{t\ge 0: R_{2,t} \le \frac{1}{n} \right\},
    \end{aligned}
  \end{equation*}
  $R$ being the solution constructed in Step 1. 
\end{lemma}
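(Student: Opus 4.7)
\textbf{Proof plan for Lemma \ref{lem:strong_bound}.}
The plan is to proceed in three stages: first, to establish a pointwise-in-$t$, uniform-in-$n$ moment bound for the stopped process $R_{\cdot\wedge\hat{\tau}_n}$; second, to obtain a uniform-in-$n$ Hölder-in-$L^q$ estimate for its increments; and third, to combine the two through a Garsia-Rodemich-Rumsey argument in order to bring the supremum inside the expectation. By Jensen's inequality (concavity of $x\mapsto x^p$ for $p\in(0,1)$) it suffices to treat $p\ge 1$, which also ensures that the Burkholder-Davis-Gundy (BDG) inequality is applicable below.

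For the first stage, on $\{t\le\hat{\tau}_n\}$ we have $R=R^{(n)}$ and $\sigma_s = \beta_0+\beta_1R_{1,s}+\beta_2\sqrt{R_{2,s}}$ for every $s\le t$, so the original untruncated coefficients are active along the path up to $t$. I would split the equations for $R_{1,t}$ and $R_{2,t}$ into the deterministic part $g_i(t)$ and the Volterra integral, apply BDG to the stochastic integral appearing in $R_{1,t}$, then Jensen's inequality with respect to the normalised measures $K_1(\cdot,t)^2\,ds/\|K_1(\cdot,t)\|_{L^2([0,t])}^2$ and $K_2(\cdot,t)\,ds/\|K_2(\cdot,t)\|_{L^1([0,t])}$ (whose total masses are uniformly bounded on $[0,T]$ by \ref{hyp:1.1}), and combine with the crude pointwise bound $\sigma_s^{2p}\le C_p(1+R_{1,s}^{2p}+R_{2,s}^p)$. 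Setting $\phi_n(t):=\mathbb{E}\bigl[R_{1,t}^{2p}\mathbb{1}_{\{t\le\hat{\tau}_n\}}+R_{2,t}^{p}\mathbb{1}_{\{t\le\hat{\tau}_n\}}\bigr]$ and invoking the moment bounds on $g(t)$ following from \ref{hyp:1.1} and \ref{hyp:1.3}, this yields the Volterra integral inequality
\[
\phi_n(t)\le C_{T,p}+C_{T,p}\int_0^t\bigl(K_1(s,t)^2+K_2(s,t)\bigr)\phi_n(s)\,ds,
\]
to which the generalised Grönwall lemma (Lemma 2.2 of \cite{zhang2010stochastic}, whose hypothesis follows from \ref{hyp:1.2}) applies to yield $\sup_{n\ge 1,\,t\in[0,T]}\phi_n(t)\le C_{T,p}$ for every $p\ge 1$.

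For the second stage, I would decompose $R_{1,t'\wedge\hat{\tau}_n}-R_{1,t\wedge\hat{\tau}_n}$ (for $0\le t<t'\le T$) as the sum of $g_1(t')-g_1(t)$, the near-$t'$ integral $\int_t^{t'}K_1(s,t')\sigma_s\mathbb{1}_{\{s<\hat{\tau}_n\}}\,dW_s$, and the kernel-difference integral $\int_0^t(K_1(s,t')-K_1(s,t))\sigma_s\mathbb{1}_{\{s<\hat{\tau}_n\}}\,dW_s$, with the analogous (purely Lebesgue) decomposition for $R_2$. Applying BDG, Hölder's inequality with the conjugate pairs $(\alpha_1,\alpha_1^*)$ and $(\alpha_2,\alpha_2^*)$, assumptions \ref{hyp:1.4}--\ref{hyp:1.5}, and the first-stage bound at the larger exponent $2p\alpha_1^*\vee p\alpha_2^*$ then gives
\[
\mathbb{E}\bigl[|R_{1,t'\wedge\hat{\tau}_n}-R_{1,t\wedge\hat{\tau}_n}|^{q}+|R_{2,t'\wedge\hat{\tau}_n}-R_{2,t\wedge\hat{\tau}_n}|^{q}\bigr]\le C_{T,q}|t'-t|^{\delta q}
\]
uniformly in $n$, for any $\delta<\min(\gamma,1/(2\alpha_1^*),1/\alpha_2^*)$. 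Choosing $q\ge 2p$ large enough that $\delta q>1$, the Garsia-Rodemich-Rumsey inequality converts this into a uniform-in-$n$ $L^q$-bound on the $\delta'$-Hölder semi-norm of the stopped process on $[0,T]$ for any $\delta'<\delta-1/q$, and combining with the first-stage bound at $t=0$ through $|R_{i,t\wedge\hat{\tau}_n}|\le|R_{i,0}|+T^{\delta'}\|R_i\|_{C^{\delta'}([0,T\wedge\hat{\tau}_n])}$ yields the claimed conclusion after raising to the appropriate power. The principal obstacle is the exponent bookkeeping in the second stage: each term produced by the BDG/Hölder chain carries a high moment of $\sigma_s$, so the first-stage bound has to be invoked at arbitrarily large exponents, which is legitimate since the generalised Grönwall lemma does not degenerate as $p$ grows.
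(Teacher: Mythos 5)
Your three-stage architecture (pointwise-in-$t$ moment bound, uniform Hölder-in-$L^q$ increment estimate, Garsia-Rodemich-Rumsey to bring the supremum inside) is exactly the architecture of the paper's proof, which goes through Lemma~\ref{lem:moments}, an increment estimate on $R_{1,t'}-R_{1,t}$ and $R_{2,t'}-R_{2,t}$, and Lemma~\ref{lem:grr}. Stages 2 and 3 coincide with the paper's argument.

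The genuine deviation, and where there is a gap to close, is in Stage 1. You apply Jensen's inequality with respect to the normalised measures $K_1(\cdot,t)^2\,ds/\|K_1(\cdot,t)\|_{L^2}^2$ and $K_2(\cdot,t)\,ds/\|K_2(\cdot,t)\|_{L^1}$, which keeps $K_1(\cdot,t)^2$ and $K_2(\cdot,t)$ \emph{inside} the inner integral, giving the Volterra inequality $\phi_n(t)\le C_{T,p}+C_{T,p}\int_0^t(K_1(s,t)^2+K_2(s,t))\phi_n(s)\,ds$, and then you invoke Zhang's generalised Grönwall lemma with the claim that its hypothesis follows from \ref{hyp:1.2}. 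But \ref{hyp:1.2} only guarantees $\limsup_{\varepsilon\downarrow 0}\sup_t\int_t^{t+\varepsilon}(K_1^2+K_2)\,ds<1$, not $<1/C_{T,p}$, and your effective Volterra kernel is $C_{T,p}(K_1^2+K_2)$ with a constant that grows with $p$ (you explicitly need this at arbitrarily large exponents in Stage~2). So the smallness hypothesis of the generalised Grönwall lemma is not obviously inherited. The paper sidesteps this entirely: in Lemma~\ref{lem:moments} it uses Hölder's inequality with the exponent pairs $(\alpha_1,\alpha_1^*)$ and $(\alpha_2,\alpha_2^*)$ of assumption \ref{hyp:1.4}, followed by a further Hölder in $s$ with the crude factor $T^{p/\alpha_i^*-1}$, to pull the kernels \emph{out} of the inner integral and land on $\phi_n(t)\le C_{T,p}\bigl(1+\int_0^t\phi_n(s)\,ds\bigr)$, to which the ordinary Grönwall lemma applies for any constant. (The paper only ever invokes the generalised Grönwall lemma with a zero forcing term, in the uniqueness steps, where the size of the multiplicative constant is immaterial.) You should either swap your Jensen step for this Hölder/\ref{hyp:1.4} step, or verify against the precise statement of Zhang's Lemma 2.2 that its hypothesis survives multiplication by an arbitrary constant in the nonhomogeneous case. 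Finally, any form of Grönwall here presupposes that $\phi_n$ is a priori bounded on $[0,T]$; the paper supplies this via the moment bound \eqref{eq:boundedness_Rn} on the truncated process $R^{(n)}$ (using $R=R^{(n)}$ on $\{t\le\hat{\tau}_n\}$), and your write-up should say so explicitly.
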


\paragraph{Step 2 of the proof of Theorem \ref{thm:main_result_1}: proof of $\mathbb{P}(\tau < \infty)=0$.} 

Let $T>0$, we have by Markov's inequality and Lemma \ref{lem:strong_bound}:
  \begin{equation*}
    \begin{aligned}
      \mathbb{P}(\hat{\tau}_n \le T) &\le \mathbb{P}\left(\sup_{t\in[0,T\wedge \hat{\tau}_n]} \left|\beta_0+\beta_1R_{1,t}+\beta_2\sqrt{R_{2,t}} \right| \ge n \right) + \mathbb{P}\left(\inf_{t\in[0,T\wedge \hat{\tau}_n]}R_{2,t}\le \frac{1}{n} \right) \\
      &\le \frac{1}{n}\mathbb{E}\left[\sup_{t\in[0,T\wedge \hat{\tau}_n]} \left|\beta_0+\beta_1R_{1,t}+\beta_2\sqrt{R_{2,t}} \right| \right]  +  \mathbb{P}\left(\inf_{t\in[0,T\wedge \hat{\tau}_n]}R_{2,t}\le \frac{1}{n} \right) \\
      &\le \frac{C_{T,\frac 12}}{n}+ \mathbb{P}\left(\inf_{t\in[0,T\wedge \hat{\tau}_n]}R_{2,t}\le \frac{1}{n} \right).
    \end{aligned}
  \end{equation*}
  Besides, by assumption \ref{hyp:1.6}, we have the following deterministic lower bound which does not depend on $t$:
  \begin{equation*}
    \begin{aligned}
      R_{2,t} &= \int_{-\Delta}^0K_2(s,t)\sigma_s^2 ds + \int_0^t K_2(s,t) \sigma_s^2 ds \\
      &\ge \inf_{t'\in [0,T]} \underbrace{\int_{-\Delta}^0  K_2(s,t')\sigma_s^2 ds}_{g_2(t')} >0.
    \end{aligned}
  \end{equation*}
  We deduce that $\mathbb{P}\left(\inf_{t\in[0,T\wedge \hat{\tau}_n]}R_{2,t}\le \frac{1}{n} \right) = 0$ for $n$ large enough and consequently that $\mathbb{P}(\hat{\tau}_n \le T) \xrightarrow[n\to+\infty]{} 0$ for all $T>0$. Hence, $\mathbb{P}(\tau < \infty)=0$ since $\hat{\tau}_n \nearrow \tau$.\\

  \paragraph{Step 3: solution's uniqueness.} Let $R$ be the global continuous solution to Equation (\ref{eq:r_dynamics}) that we have constructed and let $\tilde{R}$ be another global continuous solution. Let us set $\nu_n = \nu_n^1 \wedge \nu_n^2$, $\tilde{\nu}_n =\tilde{\nu}_n^1 \wedge \tilde{\nu}_n^2$ and $S_n = \nu_n \wedge \tilde{\nu}_n$ where:
  \begin{equation*}
    \begin{aligned}
      \nu_n^1 &= \inf\left\{t\ge 0: \left|\beta_0+\beta_1R_{1,t}+\beta_2\sqrt{R_{2,t}} \right| \ge n \right\}, \\
      \nu_n^2 &= \inf\left\{t\ge 0: R_{2,t} \le \frac{1}{n} \right\} 
    \end{aligned}
  \end{equation*}
  and where $\tilde{\nu}_n^1$, $\tilde{\nu}_n^2$ are similarly defined for $\tilde{R}$ instead of $R$. Let $0\le t \le T$, we have:
  \begin{equation*}
    \begin{aligned}
      &\mathbb{E}\left[\|R_t-\tilde{R}_t \|^2 \mathbb{1}_{\{t \le S_n \}}\right] \le C \Biggl( \underbrace{\mathbb{E}\left[\left| \int_0^{t} K_1(s,t)\left(\beta_1(R_{1,s}-\tilde{R}_{1,s})+\beta_2(\sqrt{R_{2,s}}-\sqrt{\tilde{R}_{2,s}}) \right)\mathbb{1}_{\{s \le S_n \}} dW_s  \right|^2\right]}_{I_1}  \\
      &+ \underbrace{\mathbb{E}\left[\left|\int_0^{t} K_2(s,t)\left(\left( \beta_0+\beta_1R_{1,s}+\beta_2\sqrt{R_{2,s}}\right)^2- \left(\beta_0+\beta_1\tilde{R}_{1,s}+\beta_2\sqrt{\tilde{R}_{2,s}} \right)^2  \right) \mathbb{1}_{\{s \le S_n \}}ds \right|^2\right]}_{I_2}\Biggr) 
    \end{aligned}
  \end{equation*}
  where $C$ is a constant. In the sequel, this constant can change from line to line and may depend on $n$ but not on $t$. By Burkholder-Davis-Gundy's inequality (abbreviated BDG in the sequel) applied to the local martingale $\left(\int_0^r  K_1(s,t)\left(\beta_1(R_{1,s}-\tilde{R}_{1,s})+\beta_2(\sqrt{R_{2,s}}-\sqrt{\tilde{R}_{2,s}}) \right)\mathbb{1}_{\{s \le S_n \}} dW_s\right)_{r\in[0,t]}$ and the fact that $\sqrt{R_{2,s}}$ and $\sqrt{\tilde{R}_{2,s}}$ are bounded from below by $1/\sqrt{n}$ for $s\le S_n$, we have:
  \begin{equation*}
    \begin{aligned}
    I_1 &\le C \mathbb{E}\left[\int_0^t K_1(s,t)^2\left(\beta_1(R_{1,s}-\tilde{R}_{1,s})+\beta_2(\sqrt{R_{2,s}}-\sqrt{\tilde{R}_{2,s}}) \right)^2  \mathbb{1}_{\{s \le S_n \}}ds\right] \\
  & \le C \mathbb{E}\left[\int_0^t 2 K_1(s,t)^2\left(\beta_1^2(R_{1,s}-\tilde{R}_{1,s})^2+\frac{\beta_2^2n}{4} (R_{2,s}-\tilde{R}_{2,s})^2 \right)  \mathbb{1}_{\{s \le S_n \}}ds\right]\\
  &\le C \int_0^t K_1(s,t)^2 \mathbb{E}\left[\|R_s - \tilde{R}_s\|^2 \mathbb{1}_{\{s\le S_n\}}\right] ds. 
    \end{aligned}
  \end{equation*}
  Moreover, using the fact that $2\beta_0+\beta_1(R_{1,s}+\tilde{R}_{1,s})+\beta_2\left(\sqrt{R_{2,s}}+\sqrt{\tilde{R}_{2,s}}\right)\le 2n $ and $\sqrt{R_{2,s}}+\sqrt{\tilde{R}_{2,s}}\ge 2/\sqrt{n}$ for $s\le S_n$, by the Cauchy-Schwarz inequality and by assumption \ref{hyp:1.1}, we get:
  \begin{equation*}
    \begin{aligned}
      I_2 &\le 4n^2 \mathbb{E}\left[\left|\int_0^t K_2(s,t)\left(\beta_1\left(R_{1,s}-\tilde{R}_{1,s}\right)+\beta_2\left(\sqrt{R_{2,s}}-\sqrt{\tilde{R}_{2,s}}\right)\right)\mathbb{1}_{\{s\le S_n\}}ds \right|^2 \right]\\
      &\le 4n^2 \int_0^t K_2(s,t)ds \times \mathbb{E}\left[\int_0^t K_2(s,t)  \left(\beta_1\left(R_{1,s}-\tilde{R}_{1,s}\right)+\beta_2\left(\sqrt{R_{2,s}}-\sqrt{\tilde{R}_{2,s}}\right)\right)^2\mathbb{1}_{\{s\le S_n\}}ds \right]  \\
      &\le C \int_0^t K_2(s,t)\mathbb{E}\left[\|R_s - \tilde{R}_s\|^2 \mathbb{1}_{\{s\le S_n\}}\right] ds.
    \end{aligned}
  \end{equation*}
  We deduce that:
  \begin{equation*}
    \mathbb{E}\left[ \|R_t-\tilde{R}_t \|^2 \mathbb{1}_{\{t \le S_n \}}\right] \le C \int_0^t \left(K_1(s,t)^2+ K_2(s,t)\right) \mathbb{E}\left[\|R_s-\tilde{R}_s \|^2 \mathbb{1}_{\{s\le S_n \}} \right] ds 
  \end{equation*}
  Using the generalized Grönwall's lemma in \cite{zhang2010stochastic} (Lemma 2.2) and the continuity of $R$ and $\tilde{R}$, we obtain that $\mathbb{E}\left[ \|R_t-\tilde{R}_t \|^2 \mathbb{1}_{\{t \le S_n \}}\right] = 0$. Thus, almost surely, 
  \begin{equation*}
    (R_t-\tilde{R}_t) \mathbb{1}_{\{t\le S_n\}} = 0. 
  \end{equation*}
  When $S_n$ is finite, since $R$ and $\tilde{R}$ coincide on $[0,S_n]$ and either 
  \begin{equation*}
    \left|\beta_0+\beta_1R_{1,S_n}+\beta_2R_{2,S_n} \right| =  \left|\beta_0+\beta_1\tilde{R}_{1,S_n}+\beta_2\tilde{R}_{2,S_n} \right| \ge n 
  \end{equation*}
  or 
  \begin{equation*}
    R_{2,S_n} = \tilde{R}_{2,S_n} \le \frac{1}{n},
  \end{equation*}
then we have necessarily $S_n=\nu_n = \tilde{\nu}_n$ by definition of $\nu_n$ and $\tilde{\nu}_n$. According to the second step of the proof of Theorem \ref{thm:main_result_1}, $\nu_n \xrightarrow[n\to +\infty]{a.s.}+\infty$. Hence, $S_n \xrightarrow[n\to +\infty]{a.s.}+\infty$ and $R_t=\tilde{R}_t$ for all $t\ge 0$. \\

We conclude this section with the proof of Lemma \ref{lem:strong_bound} which relies on the two next lemmas. 

\begin{lemma}\label{lem:moments}
  Under the assumptions \ref{hyp:1.1}-\ref{hyp:1.5}, we have for all $T>0$, $p\ge 0$ and $n\in \mathbb{N}^*$:
  \begin{equation*}
    \sup_{t\in [0,T]} \mathbb{E}\left[ (|R_{1,t}|^{2p}+R_{2,t}^{p})\mathbb{1}_{\{t \le \hat{\tau}_n \}} \right] \le C_{T,p}
  \end{equation*}
  where $R$ and $\hat{\tau}_n$ are defined respectively in Step 1 and 2, and $C_{T,p}$ is a constant depending only on $T$ and $p$. 
\end{lemma}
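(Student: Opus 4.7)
The plan is to derive a Volterra-type integral inequality for
$\phi_n(t) := \mathbb{E}[(|R_{1,t}|^{2p}+R_{2,t}^{p})\mathbb{1}_{\{t\le \hat{\tau}_n\}}]$
in which the multiplicative constants and the kernel-weighted integral are \emph{independent of $n$}. The role of the localization $\hat{\tau}_n$ is exclusively to ensure that all expectations involving stochastic integrals are finite and that Burkholder--Davis--Gundy (BDG) and Fubini apply as genuine identities; the bound itself will come from the kernel assumptions alone and so be uniform in $n$. It suffices to prove the statement for integer $p\ge 1$, since for $p\in[0,1)$ the inequality $x^p\le 1+x$ reduces the case to $p=1$.

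The first step is to use that on $\{t\le \hat{\tau}_n\}$ we have $\mathbb{1}_{\{s\le \hat{\tau}_n\}}=1$ for every $s\le t$, so that
\begin{equation*}
R_{1,t}\mathbb{1}_{\{t\le \hat{\tau}_n\}}=\mathbb{1}_{\{t\le\hat{\tau}_n\}}\!\left(g_1(t)+\int_0^t K_1(s,t)\sigma_s\mathbb{1}_{\{s\le \hat{\tau}_n\}}dW_s\right),
\end{equation*}
and analogously for $R_{2,t}$ with the drift term. The truncated integrand $\sigma_s\mathbb{1}_{\{s\le \hat{\tau}_n\}}$ is bounded by $n$, turning the stochastic integral (at fixed $t$, as a process in $r\le t$) into a genuine $L^{2p}$-martingale, which legitimizes BDG.

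Next, for the $R_1$-term I will apply BDG followed by Jensen's inequality against the finite measure $K_1(s,t)^2\,ds$ on $[0,t]$, whose total mass is bounded by a constant $C_T$ uniformly in $t\in[0,T]$ by \ref{hyp:1.1}:
\begin{equation*}
\mathbb{E}\!\left[\!\left|\int_0^t\! K_1(s,t)\sigma_s\mathbb{1}_{\{s\le\hat{\tau}_n\}}dW_s\right|^{2p}\right]\le C_p\!\left(\!\int_0^t\! K_1(s,t)^2ds\right)^{\!p-1}\!\!\int_0^t\! K_1(s,t)^2\,\mathbb{E}[\sigma_s^{2p}\mathbb{1}_{\{s\le \hat{\tau}_n\}}]\,ds.
\end{equation*}
Analogously, for the $R_2$-term, Jensen against the measure $K_2(s,t)\,ds$ yields the same structure with the exponent $p-1$ on $\int_0^t K_2(s,t)\,ds$ (again bounded uniformly in $t$ by \ref{hyp:1.1}). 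The free terms $g_1(t)$ and $g_2(t)$ are handled via \ref{hyp:1.3}, which uniformly bounds the initial integrand, combined with \ref{hyp:1.1} and standard Gaussian moments for $g_1$; thus $\mathbb{E}[|g_1(t)|^{2p}]+g_2(t)^p\le C_{T,p}$. Using $(\beta_0+\beta_1 r_1+\beta_2\sqrt{r_2})^{2p}\le C_p(1+|r_1|^{2p}+r_2^{p})$ to dominate $\sigma_s^{2p}$, I will then obtain
\begin{equation*}
\phi_n(t)\le C_{T,p}+C_{T,p}\int_0^t\!\bigl(K_1(s,t)^2+K_2(s,t)\bigr)\phi_n(s)\,ds,\qquad t\in[0,T].
\end{equation*}

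The final step is to apply the generalized Grönwall lemma (Lemma 2.2 of \cite{zhang2010stochastic}); its hypothesis is guaranteed by \ref{hyp:1.2}, which itself follows from \ref{hyp:1.4} by Hölder's inequality. The conclusion is $\phi_n(t)\le C_{T,p}$ for all $t\in[0,T]$, with $C_{T,p}$ depending only on $T$, $p$, the $\beta_i$'s, the initial data, and the kernel bounds in \ref{hyp:1.1}--\ref{hyp:1.5}, but \emph{not on $n$}. Taking the supremum in $t$ gives the claim. I expect the main subtlety to be the careful bookkeeping of the indicator $\mathbb{1}_{\{s\le\hat{\tau}_n\}}$ inside the stochastic integral so that (i) the BDG bound uses only the kernel integrability and the unknown $\phi_n$ itself, and (ii) no factor of $n$ (coming from the pointwise bound $|\sigma_s|\le n$) appears in the constants — only in the finiteness of intermediate quantities needed to justify the manipulations.
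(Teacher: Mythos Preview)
Your argument is correct and reaches the same conclusion, but the route is genuinely different from the paper's. The paper uses assumption \ref{hyp:1.4} in an essential way: after BDG it applies H\"older's inequality \emph{twice} (first with exponents $\alpha_i,\alpha_i^*$ to split $K_i$ from $\sigma$, then once more in $t$) so as to arrive at a \emph{classical} Gr\"onwall inequality $\phi_n(t)\le C_{T,p}\bigl(1+\int_0^t\phi_n(s)\,ds\bigr)$, and then invokes the a~priori finiteness $\sup_{s\le T}\phi_n(s)<\infty$ (coming from $R=R^{(n)}$ on $\{t\le\hat\tau_n\}$ together with \eqref{eq:boundedness_Rn}) to apply ordinary Gr\"onwall. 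You instead use only \ref{hyp:1.1} via a single Jensen step against the probability measure proportional to $K_1(s,t)^2\,ds$ (resp.\ $K_2(s,t)\,ds$), which leaves the kernel inside the inequality and produces a Volterra-type bound $\phi_n(t)\le C_{T,p}+C_{T,p}\int_0^t(K_1(s,t)^2+K_2(s,t))\phi_n(s)\,ds$; you then close with Zhang's generalized Gr\"onwall lemma. Two small points you should make explicit: (i) the hypothesis of Zhang's lemma must be checked for the \emph{scaled} kernel $C_{T,p}(K_1^2+K_2)$, so \ref{hyp:1.2} alone (limsup ${}<1$) is not enough---you really need the limsup to be $0$, which indeed follows from \ref{hyp:1.4} via H\"older as you note; and (ii) the generalized Gr\"onwall lemma also requires the unknown to be a~priori bounded on $[0,T]$, which is exactly where the identification $R|_{[0,\hat\tau_n]}=R^{(n)}|_{[0,\hat\tau_n]}$ and \eqref{eq:boundedness_Rn} enter---this is the ``finiteness of intermediate quantities'' you allude to, and it deserves one explicit line. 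The trade-off: your estimate is shorter (one Jensen instead of two H\"older steps, and no restriction $p\ge\max(\alpha_1^*,\alpha_2^*)$), at the cost of invoking a heavier Gr\"onwall tool; the paper's route does more work upfront but lands on the elementary Gr\"onwall inequality with an explicit exponential constant.
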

\begin{proof}
  Let $T>0$, $t\in [0,T]$, $n\in \mathbb{N}^*$ and $p\ge \max(\alpha_1^*,\alpha_2^*)$ where $\alpha_i^*:=\frac{\alpha_i}{\alpha_i-1}$ and $\alpha_1$, $\alpha_2$ are defined in assumption \ref{hyp:1.4}. On the one hand, we have for $R_1$:
  \begin{equation*}
      \mathbb{E}\left[|R_{1,t}|^{2p} \mathbb{1}_{\{t\le \hat{\tau}_n\}} \right] \le 2^{p-1}\left(\mathbb{E}[|g_1(t)|^{2p}]+\mathbb{E}\left[\left|\int_0^{t\wedge \hat{\tau}_n}K_1(s,t)\left(\beta_0+\beta_1R_{1,s}+\beta_2\sqrt{R_{2,s}} \right)dW_s \right|^{2p} \right] \right).
  \end{equation*}
Applying BDG's inequality and twice Hölder's inequality, we get for the second term:
  \begin{equation*}
    \begin{aligned}
      &\mathbb{E}\left[\left|\int_0^{t\wedge \hat{\tau}_n}K_1(s,t)\left(\beta_0+\beta_1R_{1,s}+\beta_2\sqrt{R_{2,s}} \right)dW_s \right|^{2p} \right] \\
      &\le \mathbb{E}\left[\left(\int_0^{t\wedge \hat{\tau}_n} K_1(s,t)^2\left(\beta_0+\beta_1R_{1,s}+\beta_2\sqrt{R_{2,s}}\right)^2ds \right)^{p}\right] \\
      &\le  \left(\int_0^t K_1(s,t)^{2\alpha_1}\right)^{\frac{p}{\alpha_1}}\mathbb{E}\left[\left(\int_0^{t\wedge \hat{\tau}_n}\left|\beta_0+\beta_1R_{1,s}+\beta_2\sqrt{R_{2,s}}\right|^{2\alpha_1^*}ds\right)^{\frac{p}{\alpha_1^*}}\right] \\
      &\le  \left(\int_0^t K_1(s,t)^{2\alpha_1}\right)^{\frac{p}{\alpha_1}} T^{\frac{p}{\alpha_1^*}-1}\int_0^{t}\mathbb{E}\left[\left|\beta_0+\beta_1R_{1,s}\mathbb{1}_{\{s\le \hat{\tau}_n\}}+\beta_2\sqrt{R_{2,s}\mathbb{1}_{\{s\le \hat{\tau}_n\}}}\right|^{2p}\right] ds \\
      &\le C_{p,T}\left(1+\int_0^t \mathbb{E}[(|R_{1,s}|^{2p}+R_{2,s}^p)\mathbb{1}_{\{s\le \hat{\tau}_n\}}]ds \right)
    \end{aligned}
  \end{equation*}
where we used assumption \ref{hyp:1.4} for the last line. On the other hand, we have for $R_2$:
\begin{equation*}
  \mathbb{E}\left[R_{2,t}^p\mathbb{1}_{\{t\le \hat{\tau}_n \}} \right] \le 2^{p-1}\left(\mathbb{E}[g_2(t)^p]+\mathbb{E}\left[\left(\int_0^{t\wedge \hat{\tau}_n} K_2(s,t)\left(\beta_0+\beta_1R_{1,s}+\beta_2\sqrt{R_{2,s}} \right)^2ds  \right)^p \right] \right).
\end{equation*}
Applying again twice Hölder's inequality and using assumption \ref{hyp:1.4} for the second term, we obtain similarly:
\begin{equation*}
    \mathbb{E}\left[\left(\int_0^{t\wedge \hat{\tau}_n} K_2(s,t)\left(\beta_0+\beta_1R_{1,s}+\beta_2\sqrt{R_{2,s}} \right)^2ds  \right)^p \right]\le C_{T,p}\left(1+\int_0^t \mathbb{E}[(|R_{1,s}|^{2p}+R_{2,s}^p)\mathbb{1}_{\{s\le \hat{\tau}_n\}}]ds \right).
\end{equation*}
Bringing together the four previous equations and using the fact that $\sup_{t\in [0,T]}\mathbb{E}[\|g(t)\|^p] < \infty$ by conditions \ref{hyp:1.1} and \ref{hyp:1.3}, we deduce that:
\begin{equation*}
  \mathbb{E}\left[(|R_{1,t}|^{2p}+R_{2,t}^p)\mathbb{1}_{\{t\le \hat{\tau}_n \}} \right] \le C_{T,p}\left(1+\int_0^t \mathbb{E}[(|R_{1,s}|^{2p}+R_{2,s}^p)\mathbb{1}_{\{s\le \hat{\tau}_n\}}]ds \right).
\end{equation*}
Since 
\begin{equation*}
  \mathbb{E}[(|R_{1,s}|^{2p}+R_{2,s}^p)\mathbb{1}_{\{s\le \hat{\tau}_n\}}]=\mathbb{E}\left[\left(\left|R_{1,s}^{(n)}\right|^{2p}+\left|R_{2,s}^{(n)}\right|^{p} \right)\mathbb{1}_{\{s\le \hat{\tau}_n\}} \right] 
  \le  \sup_{s\in [0,T]} \mathbb{E}\left[ \left\|R_t^{(n)}\right\|^{2p} + \left\|R_t^{(n)}\right\|^{p} \right] 
\end{equation*}
which is finite by (\ref{eq:boundedness_Rn}), we can apply Grönwall's lemma to obtain:
\begin{equation*}
  \mathbb{E}\left[(|R_{1,t}|^{2p}+R_{2,t}^p)\mathbb{1}_{\{t\le \hat{\tau}_n \}} \right] \le C_{T,p}. 
\end{equation*}
\end{proof}

The following lemma is a consequence of the Garsia-Rodemich-Rumsey inequality (\citeauthor{garsia1970real}, \citeyear{garsia1970real}) and is inspired from Appendix A.3 in \cite{nualart2006}. 

\begin{lemma}\label{lem:grr}
  Let $(X_t)_{t\ge 0}$ be a real-valued process and $\tau$ be a stopping time bounded by a constant $T>0$. If there exists $p>0$, $C>0$ and $\varepsilon>0$ such that for all $s,t\ge 0$:
  \begin{equation} \label{eq:ineq_grr}
    \mathbb{E}\left[|X_t-X_s|^p \mathbb{1}_{\{s,t \in [0,\tau] \}} \right] \le C |t-s|^{1+\varepsilon},
  \end{equation}
  and $\mathbb{E}[|X_{t^*}|^p] <\infty$ for some $t^*$ such that $\mathbb{P}(t^*\le \tau)=1$, then:
  \begin{equation*}
    \mathbb{E}\left[\sup_{t\in [0,\tau]} |X_t|^p \right] < \infty. 
  \end{equation*}
\end{lemma}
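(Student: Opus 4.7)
The strategy is to apply the Garsia--Rodemich--Rumsey (GRR) inequality pathwise on the random interval $[0,\tau]$, controlling its double-integral input via hypothesis \eqref{eq:ineq_grr}. First, I fix
\begin{equation*}
\alpha \in \left(\frac{2}{p},\frac{2+\varepsilon}{p}\right),
\end{equation*}
which is non-empty since $\varepsilon>0$, and I would use the GRR inequality with $\Psi(x)=|x|^p$ and $\phi(u)=u^\alpha$, which produces a Hölder-type bound with exponent $\alpha - 2/p>0$. Introduce the random variable
\begin{equation*}
\xi := \int_0^T\int_0^T \frac{|X_t-X_s|^p}{|t-s|^{\alpha p}}\mathbb{1}_{\{s,t\in[0,\tau]\}}\,ds\,dt.
\end{equation*}
By Fubini--Tonelli and \eqref{eq:ineq_grr},
\begin{equation*}
\mathbb{E}[\xi] \le C\int_0^T\int_0^T|t-s|^{1+\varepsilon-\alpha p}\,ds\,dt < \infty,
\end{equation*}
the last integral being finite because $1+\varepsilon-\alpha p > -1$. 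In particular, $\xi<\infty$ almost surely.

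Next, I would fix $\omega$ such that $\xi(\omega)<\infty$. Since the indicator equals $1$ on $[0,\tau(\omega)]^2$, the double integral appearing in GRR when applied to $X_{\cdot}(\omega)$ on $[0,\tau(\omega)]$ is bounded by $\xi(\omega)$, so GRR gives
\begin{equation*}
|X_t(\omega)-X_s(\omega)| \le C_{\alpha,p}\,\xi(\omega)^{1/p}\,|t-s|^{\alpha-2/p},\qquad s,t\in[0,\tau(\omega)].
\end{equation*}
Anchoring at $t^*$, which lies in $[0,\tau]$ almost surely, yields
\begin{equation*}
\sup_{t\in[0,\tau]}|X_t| \le |X_{t^*}| + C_{\alpha,p}\,T^{\alpha-2/p}\,\xi^{1/p}.
\end{equation*}
Raising to the power $p$ and taking expectations, together with the hypotheses $\mathbb{E}[|X_{t^*}|^p]<\infty$ and the bound $\mathbb{E}[\xi]<\infty$ already established, delivers the desired conclusion $\mathbb{E}\bigl[\sup_{t\in[0,\tau]}|X_t|^p\bigr]<\infty$.

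The one subtlety is justifying the pathwise use of GRR on the random subinterval $[0,\tau(\omega)]$: for each $\omega$, $\tau(\omega)$ is a deterministic point of $[0,T]$, so GRR applies directly as soon as $X_{\cdot}(\omega)$ has the regularity required by the inequality, which holds in the applications of the lemma (where $X$ will be continuous). A minor point is the case $p<1$, where $\Psi(x)=|x|^p$ is no longer convex; however, the formulation of GRR in \cite{nualart2006} only requires $\Psi$ to be strictly increasing on $\mathbb{R}_+$ with $\Psi(0)=0$ and $\Psi(\infty)=\infty$, so no modification of the argument is needed. The main balancing act is therefore the joint choice of $\alpha$ satisfying both $\alpha p > 2$ (so GRR produces a strictly positive Hölder exponent) and $\alpha p < 2+\varepsilon$ (so $\mathbb{E}[\xi]$ is finite), which is precisely the role played by the factor $|t-s|^{1+\varepsilon}$ in the hypothesis.
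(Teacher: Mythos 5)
Your proof is correct and follows essentially the same route as the paper: both bound the GRR double integral $\int_{[0,\tau]^2}|X_t-X_s|^p/|t-s|^{\gamma}\,ds\,dt$ in expectation via Fubini and hypothesis \eqref{eq:ineq_grr}, apply GRR pathwise with $\Psi(x)=|x|^p$ and a power-law $\phi$, and anchor at $t^*$; your parameter $\alpha$ corresponds to the paper's $\gamma/p$. In fact you state the required window more carefully than the paper ($2<\alpha p<2+\varepsilon$, i.e.\ $\gamma\in(2,2+\varepsilon)$ — the paper's written range $0<\gamma<(2+\varepsilon)\wedge 2(p+1)$ does not itself guarantee convergence of $\int_0^{|t-s|}u^{(\gamma-2)/p-1}\,du$) and you explicitly flag the pathwise-on-$[0,\tau(\omega)]$ and $p<1$ subtleties that the paper leaves implicit.
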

\begin{proof}
  Define $\psi(x)=|x|^p$ and $q(x)= |x|^{\frac{\gamma}{p}}$ for $0 < \gamma < (2+\varepsilon)\wedge 2(p+1)$. We have:
  \begin{equation*}
    \begin{aligned}
      B&:= \int_{[0,\tau]^2} \psi\left(\frac{X_t-X_s}{q(t-s)} \right)ds dt \\
      &= \int_{[0,T]^2} \frac{|X_t-X_s|^p}{|t-s|^{\gamma}}\mathbb{1}_{\{ s,t \in [0,\tau]\}} ds dt.
    \end{aligned}
  \end{equation*}
  Taking the expectation on both sides, we get by Fubini's theorem and inequality (\ref{eq:ineq_grr}):
  \begin{equation*}
    \mathbb{E}[B] \le \int_{[0,T]^2} |t-s|^{1+\varepsilon-\gamma}ds dt. 
  \end{equation*}
  The right hand side is finite since $0 < \gamma < 2+\varepsilon$, therefore $\mathbb{E}[B]$ is finite as well and we deduce that $B$ is almost surely finite. By the Garsia-Rodemich-Rumsey inequality (\citeauthor{garsia1970real}, \citeyear{garsia1970real}), we obtain that almost surely for $s,t \in [0,\tau]$:
  \begin{equation*}
    |X_t-X_s| \le 8 \int_0^{|t-s|} \psi^{-1}\left(\frac{4B}{u^2} \right) q(du) 
  \end{equation*}
  where $\psi^{-1}(u) := \sup \{v: \psi(v)\le u \} = |u|^{1/p}$. Noting that $q(du)= \frac{\gamma}{p}u^{\frac{\gamma}{p}-1}du$ for $u >0$, we get:
  \begin{equation*}
    \begin{aligned}
      |X_t-X_s| &\le C_{p,\gamma} B^{\frac{1}{p}} \int_0^{|t-s|} u^{\frac{\gamma-2}{p}-1} du  \\ 
      &\le C_{p,\gamma,T} B^{\frac{1}{p}}
    \end{aligned}
  \end{equation*}
  where $C_{p,\gamma,T}$ is a deterministic constant depending only on $p$, $\gamma$ and $T$. We deduce that:
  \begin{equation*}
   \sup_{t\in [0,\tau]} |X_t|^p \le |X_{t^*}|^p+ C_{p,\gamma,T}B. 
  \end{equation*}
  Hence $\mathbb{E}\left[\sup_{t\in [0,\tau]} |X_t|^p \right] < \infty$ since $\mathbb{E}[|X_{t^*}|^p] <\infty$ and $\mathbb{E}[B]<\infty$. 
\end{proof}
We can now prove Lemma \ref{lem:strong_bound}. 

\begin{proof}[Proof of Lemma \ref{lem:strong_bound}]
The idea of the proof is to show that:
\begin{equation*}
  \mathbb{E}[|R_{1,t'}-R_{1,t}|^{2p}\mathbb{1}_{\{t,t'\in[0,\hat{\tau}_n \wedge T]\}}] + \mathbb{E}[|R_{2,t'}-R_{2,t}|^{p}\mathbb{1}_{\{t,t'\in[0,\hat{\tau}_n \wedge T]\}}] \le C |t'-t|^{1+\varepsilon}
\end{equation*}
for $0\le t < t' \le T$ with $C,\varepsilon >0$ so that we can conclude by applying Lemma \ref{lem:grr} since $\mathbb{E}[|R_{1,0}|^{2p}+R_{2,0}^p]=\mathbb{E}[|g_1(0)|^{2p}+g_2(0)^p]<\infty$ (see Step 1). We start with the first term. Let $0\le t < t'\le T$ and $p > \min \left(\frac{1}{\gamma},\alpha_1^*,\alpha_2^* \right)$ where $\alpha_i^* := \frac{\alpha_i}{\alpha_i-1}$ and $\alpha_1$, $\alpha_2$, $\gamma$ are defined in assumptions \ref{hyp:1.4} and \ref{hyp:1.5}. We have:
\begin{equation*}
  \begin{aligned}
    \mathbb{E}[|R_{1,t'}-R_{1,t}|^{2p}\mathbb{1}_{\{t,t'\in[0,\hat{\tau}_n \wedge T]\}}] \le & C\Biggl(\underbrace{\mathbb{E}[|g_1(t')-g_1(t)|^{2p}]}_{I_1(t,t')} + \underbrace{\mathbb{E}\left[\left|\int_0^{t} (K_1(s,t')-K_1(s,t))\gamma(R_s)\mathbb{1}_{\{s\le \hat{\tau}_n\}}dW_s \right|^{2p}\right]}_{I_2(t,t')} \\
    &+ \underbrace{\mathbb{E}\left[\left|\int_t^{t'} K_1(s,t')\gamma(R_s)\mathbb{1}_{\{s\le \hat{\tau}_n\}}dW_s  \right|^{2p} \right]}_{I_3(t,t')} \Biggr) 
  \end{aligned}
\end{equation*}
where $C$ is a constant that can change from line to line in the sequel but which does not depend on $n$. We already showed in the Step 1 that:
\begin{equation*}
  \begin{aligned}
  I_1(t,t')&\le C \mathbb{E}\left[\left|\int_{-\Delta}^0 (K_1(s,t')-K_1(s,t))^2 (\beta_0+\beta_1r_{1,s}+\beta_2\sqrt{r_{2,s}})^2 ds  \right|^p \right] \\
  &\le C |t'-t|^{2\gamma p}. 
  \end{aligned}
\end{equation*}
By BDG's inequality, Hölder's inequality, Lemma \ref{lem:moments} and assumption \ref{hyp:1.5}, we get:
\begin{equation*}
  \begin{aligned}
  I_2(t,t')&\le C \mathbb{E}\left[\left|\int_{0}^t (K_1(s,t')-K_1(s,t))^2 (\beta_0+\beta_1R_{1,s}+\beta_2\sqrt{R_{2,s}})^2\mathbb{1}_{\{s\le \hat{\tau}_n\}} ds  \right|^p \right] \\
  &\le C\left(\int_0^t (K_1(s,t')-K_1(s,t))^2ds \right)^{p-1}\\
  &\times \int_0^t (K_1(s,t')-K_1(s,t))^2\mathbb{E}\left[\left|\beta_0+\beta_1R_{1,s}+\beta_2\sqrt{R_{2,s}}\right|^{2p}\mathbb{1}_{\{s\le \hat{\tau}_n\}} \right] ds\\ 
  &\le C \left(\int_0^t (K_1(s,t')-K_1(s,t))^2ds \right)^{p}\\
  &\le C |t'-t|^{2\gamma p }.
  \end{aligned}
\end{equation*}
Finally, by BDG's inequality, twice Hölder's inequality, assumption \ref{hyp:1.4} and Lemma \ref{lem:moments}, we have:
\begin{equation*}
  \begin{aligned}
    I_3(t,t') &\le C\mathbb{E}\left[\left|\int_t^{t'}K_1(s,t')^2\left(\beta_0+\beta_1R_{1,s}+\beta_2\sqrt{R_{2,s}}\right)^2 \mathbb{1}_{\{s\le \hat{\tau}_n\}}ds \right|^p\right]\\
    &\le C \left(\int_t^{t'} K_1(s,t')^{2\alpha_1} ds\right)^{\frac{p}{\alpha_1}} \mathbb{E}\left[\left| \int_t^{t'}\left(\beta_0+\beta_1R_{1,s}+\beta_2\sqrt{R_{2,s}}\right)^{2\alpha_1^*} \mathbb{1}_{\{s\le \hat{\tau}_n\}} ds \right|^{\frac{p}{\alpha_1^*}} \right] \\
    &\le C |t'-t|^{\frac{p}{\alpha_1^*}-1} \int_t^{t'}\mathbb{E}\left[ \left(\beta_0+\beta_1R_{1,s}+\beta_2\sqrt{R_{2,s}}\right)^{2p} \mathbb{1}_{\{s\le \hat{\tau}_n\}} \right] ds \\
    &\le C |t'-t|^{\frac{p}{\alpha_1^*}}.
  \end{aligned}
\end{equation*}
Combining the four previous equations, we obtain:
\begin{equation*}
  \mathbb{E}[|R_{1,t'}-R_{1,t}|^{2p}\mathbb{1}_{\{t,t'\in[0,\hat{\tau}_n \wedge T]\}}] \le C|t'-t|^{\xi_1}
\end{equation*}
where $\xi_1 = 2\gamma p \wedge \frac{p}{\alpha_1^*}  >1$. Using similar arguments, we can show that:
\begin{equation*}
  \mathbb{E}[|R_{2,t'}-R_{2,t}|^{p}\mathbb{1}_{\{t,t'\in[0,\hat{\tau}_n \wedge T]\}}] \le C|t'-t|^{\xi_2}
\end{equation*} 
where $\xi_2 = \gamma p \wedge  \frac{p}{\alpha_2^*}>1$. 
\end{proof}


\section{Proof of Theorem \ref{thm:main_result_2}}
In this section, $R=(R_1,R_2)$ is the local continuous solution that has been constructed in Step 1 of the proof of Theorem \ref{thm:main_result_1} up to a stopping time $\tau$ (note that this construction requires only assumption \ref{hyp:1.1}-\ref{hyp:1.5}). We will first prove that the volatility $\sigma$ is bounded from below by a positive stochastic process, and then we will show that the condition \ref{hyp:1.6} ensuring that the solution is global is implied by the additional assumptions we have made to get the positivity of the volatility. We first introduce two lemmas. 

\begin{lemma}\label{lem:differential_K}
Under assumptions \ref{hyp:1.1}-\ref{hyp:1.5} and \ref{hyp:2.1}, $R_1$ and $R_2$ are Itô processes on $[0,\tau)$ with decompositions:
 \begin{equation*}
  \begin{aligned}
   dR_{1,t} &= \left(\int_{-\Delta}^0 \partial_t K_1(u,t)\sigma_u dB_{-u}+\int_{0}^t \partial_t K_1(u,t)\sigma_u dW_u \right)dt + K_1(t,t)\sigma_t dW_t, \\ 
   dR_{2,t} &= \left(\int_{-\Delta}^t\partial_t K_2(u,t)\sigma_u^2du  \right)dt  + K_2(t,t)\sigma_t^2 dt 
  \end{aligned}
 \end{equation*}
 where we recall that $\sigma_t:= \beta_0+\beta_1R_{1,t}+\beta_2\sqrt{R_{2,t}}$. 
 \end{lemma}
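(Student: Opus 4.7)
The plan is to express each kernel via the fundamental theorem of calculus and then apply a (stochastic) Fubini theorem to interchange the orders of integration, thereby identifying the drift and martingale parts of $R_1$ and $R_2$.

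First, using the continuous differentiability of $K_i$ in its second variable from assumption \ref{hyp:2.1}, I would write, for any $u \in (-\Delta,t]$ and $t \ge 0$,
\begin{equation*}
K_i(u,t) = K_i(u, u \vee 0) + \int_{u \vee 0}^t \partial_v K_i(u,v)\, dv.
\end{equation*}
Substituting this into $R_{1,t} = \int_{-\Delta}^t K_1(u,t)\sigma_u dW_u$ splits $R_{1,t}$ into three pieces: $\int_{-\Delta}^0 K_1(u,0)\sigma_u dW_u$, the boundary martingale $\int_0^t K_1(u,u)\sigma_u dW_u$, and the double integral $\int_{-\Delta}^t \bigl(\int_{u\vee 0}^t \partial_v K_1(u,v)\, dv\bigr) \sigma_u dW_u$. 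The first term is the $t=0$ value of $g_1$ and the second is a well-defined Itô integral thanks to the assumption $\int_0^t K_1(u,u)^2 du <\infty$ in \ref{hyp:2.1} combined with the boundedness of $\sigma$ on $(-\Delta,0]$ (by \ref{hyp:1.3}) and on $[0,\tau_n]$ (by definition of $\tau_n$). Its Itô differential contributes $K_1(t,t)\sigma_t dW_t$.

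Next I would invoke the stochastic Fubini theorem on the double integral to rewrite it as
\begin{equation*}
\int_0^t \left(\int_{-\Delta}^v \partial_v K_1(u,v)\sigma_u dW_u\right) dv,
\end{equation*}
which is an absolutely continuous drift term producing $\bigl(\int_{-\Delta}^t \partial_t K_1(u,t)\sigma_u dW_u\bigr) dt$. The argument for $R_2$ follows the same pattern but uses ordinary Fubini, since the inner integration is with respect to $du$ rather than $dW_u$; the boundary term yields the drift $K_2(t,t)\sigma_t^2 dt$, and swapping orders in the double integral produces $\bigl(\int_{-\Delta}^t \partial_t K_2(u,t)\sigma_u^2 du\bigr) dt$. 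Adding the pieces gives exactly the Itô decompositions claimed in the lemma.

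The main obstacle is verifying the integrability hypotheses of the stochastic Fubini theorem for $R_1$ on $[0,\tau)$. Since $\tau$ is only a stopping time, I would work on $[0,\tau_n \wedge T]$ for the localizing sequence $\tau_n$ from Step~1 of the proof of Theorem \ref{thm:main_result_1}, on which $|\sigma_u| \le C_n$ uniformly in $u \in (-\Delta,\tau_n]$ by \ref{hyp:1.3} and the definition of $\tau_n^1$. The stochastic Fubini hypothesis amounts to
\begin{equation*}
\int_0^T \left(\int_{-\Delta}^v |\partial_v K_1(u,v)|^2 \sigma_u^2\, du\right)^{1/2} dv < \infty,
\end{equation*}
which holds on each $[0,\tau_n\wedge T]$ by the second term of \ref{hyp:2.1} and the bound $|\sigma_u| \le C_n$; the $R_2$ analogue follows from the fourth term of \ref{hyp:2.1}. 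Since $\tau_n \nearrow \tau$ and the decompositions are consistent, they extend to the full interval $[0,\tau)$.
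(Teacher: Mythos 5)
Your proposal is correct and follows essentially the same route as the paper: both write each kernel via the fundamental theorem of calculus as the boundary value plus the integral of its $\partial_t$-derivative, then apply the stochastic Fubini theorem to the double integral for $R_1$ (ordinary Fubini for $R_2$) and verify the required integrability using \ref{hyp:2.1} together with boundedness of $\sigma$ on $(-\Delta,0]$ from \ref{hyp:1.3} and local boundedness of $\sigma$ on $[0,\tau)$ (which you obtain via the localizing stopping times $\tau_n$, while the paper invokes continuity of $R$—the same underlying fact).
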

   \begin{proof}
 We have by definition:
   \begin{equation*}
     R_{1,t} = \int_{-\Delta}^0 K_1(u,t)\sigma_udB_{-u} + \int_0^t K_1(u,t)\sigma_udW_u,
   \end{equation*}
   By the fundamental theorem of calculus, we can write $K_1(u,t) = K_1(u,s) + \int_s^t \partial_v K_1(u,v)dv$ for all $0\le u \le s\le t$. Thus, 
   \begin{equation}\label{eq:proof_r1}
       R_{1,t} = \int_{-\Delta}^0 \left( K_1(u,0) +\int_{0}^t \partial_v K_1(u,v)dv \right)\sigma_u dB_{-u} + \int_0^t \left( K_1(u,u)+\int_u^t \partial_v K_1(u,v)dv \right) \sigma_u dW_u. \\ 
   \end{equation}
Using assumption \ref{hyp:1.3} and \ref{hyp:2.1}, we have almost surely:
   \begin{equation*}
     \begin{aligned}
      \int_0^t \left(\int_{-\Delta}^0  \left|\partial_v K_1(u,v) \right|^2 \sigma_u^2 du \right)^{1/2} dv &\le \sup_{s\in(-\Delta,0]}\sigma_s \times  \int_0^t \left(\int_{-\Delta}^0 \left|\partial_v K_1(u,v) \right|^2 dv \right)^{1/2} du <\infty.
     \end{aligned}
   \end{equation*}
Similarly, using the continuity of $\sigma$ deduced from the continuity of $R$, we can show that:
   \begin{equation*}
     \int_0^t \left(\int_{0}^v \left|\partial_v K_1(u,v) \right|^2\sigma_u^2 du\right)^{1/2} dv  < \infty \ a.s. 
   \end{equation*}
   Hence, we can apply the stochastic Fubini's theorem (see \citeauthor{veraar2012}, \citeyear{veraar2012}) in Equation (\ref{eq:proof_r1}) to obtain:
   \begin{equation*}
     \begin{aligned}
     R_{1,t} &= r_{1,0}+ \int_0^t \left(\int_{-\Delta}^0\partial_v K_1(u,v) \sigma_u dB_{-u}\right) dv + \int_0^t K_1(u,u)\sigma_udW_u + \int_0^t \left(\int_0^v \partial_v K_1(u,v)\sigma_udW_u\right)dv \\
     &= r_{1,0}+ \int_0^t K_1(u,u)\sigma_udW_u + \int_0^t \left(\int_{-\Delta}^0 \partial_v K_1(u,v)\sigma_u dB_{-u}+\int_{0}^v \partial_v K_1(u,v)\sigma_u dW_u \right)dv.
     \end{aligned}
   \end{equation*}
  Using similar arguments, we also obtain:
  \begin{equation*}
    R_{2,t} = r_{2,0}+\int_0^t K_2(u,u)\sigma_u^2 du + \int_0^t \int_{-\Delta}^v \partial_v K_2(u,v) \sigma_u^2 du dv.
  \end{equation*}
   \end{proof}

The following lemma is an extension of the comparison result for solutions of SDEs of \customcite[Proposition~5.2.18]{karatzas1991}. 
\begin{lemma}\label{lem:comparison_result}
  Let $X$ and $Y$ be real-valued continuous Itô processes with decompositions:
  \begin{equation*}
    \begin{aligned}
      X_t &= X_0 + \int_0^t b_X\left(s,X_s\right) ds+\int_0^t \gamma(s,X_s) dW_s \\
      Y_t &= Y_0 + \int_0^t G_s ds+\int_0^t \gamma(s,Y_s) dW_s
    \end{aligned}
  \end{equation*}
Under the following assumptions:
  \begin{enumerate}[label=(\roman*)]
    \item $X_0\le Y_0$ a.s.;
    \item $b_X$ is a real-valued function on $\mathbb{R}_+ \times \mathbb{R}$ such that:
    \begin{equation*}
      |b_X(t,x)-b_X(t,y)| \le C|x-y| \quad \forall t\in \mathbb{R}_+, \forall x,y \in \mathbb{R}
    \end{equation*}
    for a positive constant $C$; \label{hyp:b_ineq}
    \item $G$ satisfies:
    \begin{equation*}
      b_X(t,Y_t) \le G_t  \quad \forall t\ge 0 \ a.s. ;
    \end{equation*}
    \item $\gamma$ is a real-valued function on $\mathbb{R}_+\times \mathbb{R}$ such that:
    \begin{equation*}
      |\gamma(t,x)-\gamma(t,y)|\le \ell(t) h(|x-y|) \quad \forall t\ge 0,\ \forall (x,y) \in \mathbb{R}^2
    \end{equation*}
    for $\ell:\mathbb{R}_+\rightarrow \mathbb{R}_+$ a locally square-integrable function and $h:\mathbb{R}_+\rightarrow \mathbb{R}_+$ a strictly increasing function with $h(0)=0$ and 
    \begin{equation*}
      \int_0^{\varepsilon} h^{-2}(u)du = +\infty, \quad \forall \varepsilon >0;
    \end{equation*}\label{hyp:sigma_reg}
  \end{enumerate}
we have $\mathbb{P}(X_t\le Y_t,\ \forall t\ge 0)=1$. 
\end{lemma}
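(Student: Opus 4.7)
} The strategy is the classical Yamada-Watanabe comparison technique adapted to the Itô-process setting: construct a sequence of smooth approximations $\varphi_n$ of the positive-part function $x\mapsto x^+$, apply Itô's formula to $\varphi_n(X_t-Y_t)$, and pass to the limit. The assumption \ref{hyp:sigma_reg} is precisely what allows this construction to kill the quadratic-variation term arising from the diffusion.

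First I would build the $\varphi_n$ in the standard way. Choose a decreasing sequence $(a_n)_{n\ge 0}$ with $a_0=1$, $a_n\downarrow 0$, and $\int_{a_{n+1}}^{a_n} h^{-2}(u)\,du=n$; this is possible because $\int_0^\varepsilon h^{-2}(u)\,du=+\infty$. Pick non-negative continuous $\psi_n$ supported in $(a_{n+1},a_n)$ with $\int\psi_n=1$ and $\psi_n(u)\le \frac{2}{n\,h^2(u)}$. Set
\begin{equation*}
\varphi_n(x)=\mathbb{1}_{\{x>0\}}\int_0^x\!\int_0^y \psi_n(u)\,du\,dy.
\end{equation*}
Then $\varphi_n\in C^2(\mathbb{R})$, $\varphi_n=0$ on $(-\infty,0]$, $0\le\varphi_n'\le 1$, $\varphi_n''\ge 0$, $\varphi_n''(x)=\psi_n(x)$ for $x>0$, and $\varphi_n(x)\uparrow x^+$ as $n\to\infty$.

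Next, writing $\Delta_t=X_t-Y_t$, I would apply Itô's formula (after a standard localisation $\tau_N=\inf\{t:|X_t|+|Y_t|\ge N\}$ to make the stochastic integral a true martingale, and eventually send $N\to\infty$) to get
\begin{equation*}
\varphi_n(\Delta_t)=\varphi_n(\Delta_0)+\int_0^t\varphi_n'(\Delta_s)\bigl(b_X(s,X_s)-G_s\bigr)ds+M_t+\tfrac{1}{2}\!\int_0^t\!\varphi_n''(\Delta_s)\bigl(\gamma(s,X_s)-\gamma(s,Y_s)\bigr)^2 ds,
\end{equation*}
where $M$ is a local martingale. Since $\Delta_0\le 0$ a.s., $\varphi_n(\Delta_0)=0$. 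Taking expectations, for the drift term I would use $\varphi_n'\ge 0$ together with $G_s\ge b_X(s,Y_s)$ and the Lipschitz bound \ref{hyp:b_ineq} to write
\begin{equation*}
\varphi_n'(\Delta_s)(b_X(s,X_s)-G_s)\le \varphi_n'(\Delta_s)(b_X(s,X_s)-b_X(s,Y_s))\le C\,\varphi_n'(\Delta_s)\,|\Delta_s|\,\mathbb{1}_{\{\Delta_s>0\}}\le C\Delta_s^+.
\end{equation*}
For the quadratic-variation term I would use \ref{hyp:sigma_reg} together with the fact that $\varphi_n''$ is supported on $(a_{n+1},a_n)$, on which $\psi_n(u)h^2(u)\le 2/n$, obtaining
\begin{equation*}
\tfrac12\,\varphi_n''(\Delta_s)(\gamma(s,X_s)-\gamma(s,Y_s))^2\le \tfrac12\,\psi_n(\Delta_s)\,\ell(s)^2 h(|\Delta_s|)^2\,\mathbb{1}_{\{\Delta_s>0\}}\le \frac{\ell(s)^2}{n}.
\end{equation*}

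Finally, combining these bounds gives
\begin{equation*}
\mathbb{E}[\varphi_n(\Delta_t)]\le C\int_0^t\mathbb{E}[\Delta_s^+]\,ds+\frac{1}{n}\int_0^t\ell(s)^2\,ds.
\end{equation*}
Letting $n\to\infty$ and invoking monotone convergence on the left-hand side and local square-integrability of $\ell$ on the right, I get $\mathbb{E}[\Delta_t^+]\le C\int_0^t\mathbb{E}[\Delta_s^+]\,ds$, so Grönwall's lemma yields $\mathbb{E}[\Delta_t^+]=0$ for every $t\ge 0$. Hence $X_t\le Y_t$ a.s. for each fixed $t$, and the a.s.-continuity of $X-Y$ upgrades this to $\mathbb{P}(X_t\le Y_t,\ \forall t\ge 0)=1$. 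The main technical nuisance I expect is the localisation needed to justify taking expectations of the stochastic integral (since $\gamma$ is only assumed to satisfy the modulus-of-continuity bound and $X,Y$ are not a priori in $L^2$); this is handled by the stopping times $\tau_N$ and dominated/monotone convergence once the estimate above is written on $[0,t\wedge\tau_N]$.
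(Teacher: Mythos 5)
Your proposal is correct and follows essentially the same route as the paper: the Yamada--Watanabe comparison argument, building the smooth approximations $\psi_n$ of $x\mapsto x^+$ from the condition $\int_0^\varepsilon h^{-2}=+\infty$, applying Itô's formula after localisation, bounding the quadratic-variation term by $\ell(s)^2/n$ and the drift term by $C\Delta_s^+$, and finishing with monotone convergence and Grönwall. The only differences are cosmetic (notation, indexing of the $a_n$, and phrasing the localisation via $\tau_N=\inf\{t:|X_t|+|Y_t|\ge N\}$ rather than directly asserting integrability of the diffusion terms), so both arguments are equivalent.
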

\begin{proof}
  Using the assumptions on the function $h$, we can construct a strictly decreasing sequence $(a_n)_{n\in \mathbb{N}}$ with $a_0=1$, $\lim_{n\to +\infty} a_n = 0$ and $\int_{a_n}^{a_{n-1}} h^{-2}(u)du = n$ for all $n\ge 1$. Moreover, we can construct a sequence $(\rho_n)_{n\ge 1}$ of continuous functions on $\mathbb{R}$ with support in $(a_n,a_{n-1})$ such that:
  \begin{itemize}
    \item $0\le \rho_n(x) \le \frac{2}{nh^2(x)}$ for all $x>0$ and
    \item $\int_{a_n}^{a_{n-1}}\rho_n(x)dx = 1$  
  \end{itemize} 
because the upper bound in the first item integrates to 2 over $(a_n,a_{n-1})$. Let us now define the sequence $(\psi_n)_{n\ge 1}$ as:
\begin{equation*}
  \psi_n(x) = \int_0^{x}\int_0^y \rho_n(u)du dy \quad \text{for } x \in \mathbb{R}.
\end{equation*}
For all $n\ge 1$, $\psi_n$ is twice continuously differentiable ($\psi_n(x) = 0$ for $x\le a_n$) with $\psi_n'(x) = \int_0^{x} \rho_n(u)du \in [0,1]$. Besides, $\psi_n(x) \xrightarrow[n\to+\infty]{} x^+$ since $\psi_n(x)=0$ if $x\le 0$, $\psi_n(x) \le \int_0^{x} dy = x$ and for $x\ge a_{n-1}$,
\begin{equation*}
  \begin{aligned}
  \psi_n(x) &= \int_{a_n}^{x}\int_{a_n}^y \rho_n(u)du dy \\
  &= \underbrace{\int_{a_n}^{a_{n-1}}\int_{a_n}^y \rho_n(u)du dy}_{\ge 0} + \int_{a_{n-1}}^{x}\underbrace{\int_{a_n}^y \rho_n(u)du}_{=1} dy \\
  &\ge \int_{a_{n-1}}^{x} du = x -a_{n-1}.
  \end{aligned}
\end{equation*}
Using a similar decomposition, we can also show that $\psi_{n+1}(x)-\psi_n(x)\ge 0$ for all $x \in \mathbb{R}$, i.e. that the sequence $(\psi_n)_{n\ge 1}$ is non-decreasing. By a localization argument, we can assume that without loss of generality:
\begin{equation}\label{eq:martingale_condition}
  \mathbb{E}\left[\int_0^t |\gamma(s,X_s)|^2+|\gamma(s,Y_s)|^2ds \right] < \infty
\end{equation}
since $\int_0^t \gamma(s,X_s)^2 + \gamma(s,Y_s)^2ds$ is finite for any $t\ge 0$ almost surely. We have:
\begin{equation*}
  \begin{aligned}
  \Delta_t &:= X_t-Y_t  \\
  &= X_0-Y_0+ \int_0^t \left(b_X(s,X_s)-G_s\right)ds + \int_0^t \left(\gamma(s,X_s)-\gamma(s,Y_s) \right)dW_s.
  \end{aligned}
\end{equation*}
Thus, by Itô's formula, 
\begin{equation*}
  \begin{aligned}
  \psi_n(\Delta_t) &= \psi_n\left(X_0-Y_0\right)+\int_0^t \psi_n'(\Delta_s)\left(b_X(s,X_s)-G_s\right)ds  \\
  &+ \int_0^t\psi_n'(\Delta_s)\left(\gamma(s,X_s)-\gamma(s,Y_s) \right)dW_s + \frac{1}{2}\int_0^t \psi_n''(\Delta_s)\left(\gamma(s,X_s)-\gamma(s,Y_s) \right)^2ds.
  \end{aligned}
\end{equation*}
By assumption (\ref{eq:martingale_condition}) and $0\le \psi_n' \le 1$, the stochastic integral is a martingale so its expectation is 0. Besides, $\psi_n\left(X_0-Y_0\right)\le 0$ since $\psi_n$ is non-decreasing. Thus, taking the expectation on both sides in the above equation, we get:
\begin{equation*}
  \begin{aligned}
  \mathbb{E}\left[ \psi_n(\Delta_t)\right] &\le \mathbb{E}\left[ \int_0^t \psi_n'(\Delta_s)\left(b_X(s,X_s)-G_s\right)ds\right]\\
  & +  \frac{1}{2}\mathbb{E}\left[\int_0^t \psi_n''(\Delta_s)\left(\gamma(s,X_s)-\gamma(s,Y_s) \right)^2ds \right].  
  \end{aligned}
\end{equation*}
Using that $\psi_n''(x)=\rho_n(|x|)$, $\rho_n(x) \le \frac{2}{nh^2(x)}$ and assumption \ref{hyp:sigma_reg}, we obtain:
\begin{equation*}
  \frac{1}{2}\mathbb{E}\left[\int_0^t \psi_n''(\Delta_s)\left(\gamma(s,X_s)-\gamma(s,Y_s) \right)^2ds \right] \le \frac{1}{n} \int_0^t \ell(s)^2 ds. 
\end{equation*}
Finally, by assumption \ref{hyp:b_ineq}, remark that:
\begin{equation*}
  \begin{aligned}
  b_X(s,X_s)-G_s &=  b_X(s,X_s)- b_X(s,Y_s) + \underbrace{b_X(s,Y_s)-G_s}_{\le 0}\\
  &\le C|\Delta_s|.
  \end{aligned}
\end{equation*}
Since $\psi_n'(x)=0$ for $x\le 0$ and $\psi_n'\le 1$, we have therefore:
\begin{equation*}
  \psi_n'(\Delta_s) \left(b_X(s,X_s)-G_s  \right) \le C \Delta_s^+.   
\end{equation*}
Hence,
\begin{equation*}
  \mathbb{E}\left[ \psi_n(\Delta_t)\right] \le C\int_0^t \mathbb{E}[\Delta_s^+] ds +  \frac{1}{n} \int_0^t \ell(s)^2 ds.
\end{equation*}
Taking the limit $n\to +\infty$, we deduce from the monotone convergence theorem that $ \mathbb{E}\left[ \Delta_t^+\right] \le C\int_0^t \mathbb{E}[\Delta_s^+] ds$. Again, without loss of generality, we can assume that $\mathbb{E}\left[\int_0^t \Delta_s^+ds \right] <\infty$ because if this assumption does not hold, we can use a localization argument to return to a situation where it holds. Grönwall's lemma yields $\mathbb{E}\left[ \Delta_t^+\right] = 0$. Thus, $\mathbb{P}(X_t-Y_t\le 0)=1$ for all $t\ge 0$. We deduce that $\mathbb{P}(X_t\le Y_t, \forall t \ge 0) = 1$ by path continuity of $X$ and $Y$. 
\end{proof}

Endowed with these two lemmas, we are able to establish a result generalizing the one of \cite{nutz2023guyon}. 

\begin{proposition}\label{prop:positivity}
Under assumptions \ref{hyp:1.1}-\ref{hyp:1.5} and \ref{hyp:2.1}-\ref{hyp:2.3},
we have:
\begin{equation*}
  \sigma_t\ge \sigma_0\exp\left(\beta_1\int_0^t K_1(s,s) dW_s + h(t)-h(0) -\frac{1}{2}\beta_1^2\int_0^tK_1(s,s)^2ds  \right)
\end{equation*}
 for all $0\le t < \tau$ almost surely. 
\end{proposition}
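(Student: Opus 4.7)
The plan is to show that $\sigma_t$ satisfies an Itô differential inequality whose counterpart (without the inequality) is exactly the linear SDE satisfied by $X_t$, and then invoke the comparison Lemma \ref{lem:comparison_result}. The construction in Step~1 of the proof of Theorem \ref{thm:main_result_1} guarantees $R_{2,t}>0$ on $[0,\tau)$, so Itô's formula applies to $\sqrt{R_{2,t}}$, and since $R_2$ has zero quadratic variation one simply has $d\sqrt{R_{2,t}} = \frac{1}{2\sqrt{R_{2,t}}}\,dR_{2,t}$.

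First I would use Lemma \ref{lem:differential_K} together with assumption \ref{hyp:2.2}: since $\partial_t K_1(u,t) = h'(t) K_1(u,t)$, the drift of $R_{1,t}$ factorizes as $h'(t) R_{1,t}$, giving
\begin{equation*}
 dR_{1,t} = h'(t) R_{1,t}\,dt + K_1(t,t)\sigma_t\,dW_t.
\end{equation*}
Assumption \ref{hyp:2.3} implies $\int_{-\Delta}^t \partial_t K_2(u,t)\sigma_u^2\,du \ge 2 h'(t) R_{2,t}$; dropping the non-negative term $K_2(t,t)\sigma_t^2/(2\sqrt{R_{2,t}})$ then yields $d\sqrt{R_{2,t}} \ge h'(t)\sqrt{R_{2,t}}\,dt$. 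Combining these two, using $\beta_2 \ge 0$ to preserve the inequality sign, the identity $\beta_1 R_{1,t} + \beta_2 \sqrt{R_{2,t}} = \sigma_t - \beta_0$, and discarding the non-negative term $-h'(t)\beta_0$ (since $h'\le 0$ and $\beta_0\ge 0$), produces
\begin{equation*}
 d\sigma_t \ge h'(t)\sigma_t\,dt + \beta_1 K_1(t,t)\sigma_t\,dW_t.
\end{equation*}

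An application of Itô's formula to the exponential defining $X_t$ gives $dX_t = h'(t) X_t\,dt + \beta_1 K_1(t,t) X_t\,dW_t$ with $X_0 = \sigma_0$, so $X$ and $\sigma$ share the linear diffusion coefficient $\gamma(t,x) = \beta_1 K_1(t,t)\,x$. This $\gamma$ is Lipschitz in $x$ with weight $\ell(t) = |\beta_1| K_1(t,t)$, locally square-integrable by \ref{hyp:2.1}, so it satisfies the hypothesis of Lemma \ref{lem:comparison_result} with the identity as modulus (for which $\int_0^\varepsilon u^{-2}\,du = +\infty$). Taking $X_t$ in the role of ``$X$'' (drift $b_X(s,x)=h'(s)x$) and $\sigma_t$ in the role of ``$Y$'' (with actual drift $G_s$), the displayed inequality above is precisely $b_X(s,\sigma_s)\le G_s$, and since $X_0 = \sigma_0$ the lemma yields $X_t \le \sigma_t$.

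\textbf{Main obstacle.} The delicate point is that Lemma \ref{lem:comparison_result} is stated on all of $\mathbb{R}_+$ and requires a global Lipschitz constant in $x$ for $b_X(s,x) = h'(s) x$, whereas $\sigma$ only lives on $[0,\tau)$ and $h'$ is merely continuous. I would resolve this by localizing: for fixed $T>0$ and $n\in\mathbb{N}^*$, stop both processes at $\hat\tau_n\wedge T$ (on which $R_2\ge 1/n$, $\sigma$ is bounded, and $|h'|$ is bounded on $[0,T]$), apply the comparison result to the stopped processes, and then let $n\to\infty$ and $T\to\infty$ to obtain the conclusion on $[0,\tau)$.
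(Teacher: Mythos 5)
Your proof is correct and follows essentially the same route as the paper: derive the Itô decomposition of $\sigma$ via Lemma \ref{lem:differential_K} and assumption \ref{hyp:2.2}, lower-bound the drift by $h'(t)\sigma_t$ using \ref{hyp:2.3}, $h'\le 0$ and $\beta_0,\beta_2\ge 0$, observe that the diffusion coefficient $\beta_1 K_1(t,t)\sigma_t$ coincides with that of the geometric-Brownian-type process $X$, and invoke the comparison Lemma \ref{lem:comparison_result}. The only difference is presentational — the paper explicitly writes out the drift $G_t$ and shows $G_t\ge h'(t)\sigma_t$ algebraically, whereas you phrase the same bounds as differential inequalities — and you are a bit more explicit about the localization in $n$ and $T$ (the paper fixes $n$ with $t<\hat\tau_n$ at the outset but does not elaborate on why $h'$ only needs to be bounded on $[0,T]$).
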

\begin{proof}
  Let $0\le t<\tau$. There exists $n\ge 1$ such that $t< \hat{\tau}_n$ where $\hat{\tau}_n$ is defined in the second step of the proof of Theorem \ref{thm:main_result_1}. Because on $[0,\hat{\tau}_n)$, $R_{2,t}>1/n$ and because the square root function is twice continuously differentiable on $\mathbb{R}_+^*$, we can apply Itô's lemma, which combined with Lemma \ref{lem:differential_K} yields:
  \begin{equation*}
    d\sqrt{R_{2,t}} = \frac{1}{2\sqrt{R_{2,t}}}\left(\int_{-\Delta}^t \partial_t K_2(u,t)\sigma_u^2du \right)dt + \frac{1}{2\sqrt{R_{2,t}}}K_2(t,t)\sigma_t^2dt. 
  \end{equation*}
Noting that $\partial_t K_1(u,t)=h'(t)K_1(u,t)$, we also have by Lemma \ref{lem:differential_K}:
\begin{equation*}
  dR_{1,t} = h'(t)R_{1,t}dt + K_1(t,t)\sigma_t dW_t.
\end{equation*}
Thus, by definition of $\sigma_t$:
\begin{equation*}
  \sigma_t = \sigma_0+\int_0^t G_s ds + \int_0^t \gamma(s,\sigma_s) dW_s 
\end{equation*}
where $\gamma(s,x) = \beta_1K_1(s,s) x$ for $(s,x)\in \mathbb{R}_+\times\mathbb{R}$ and
\begin{equation*}
  \begin{aligned}
    G_t &=  \beta_1h'(t) R_{1,t} + \frac{\beta_2}{2\sqrt{R_{2,t}}}\left(\int_{-\Delta}^t \partial_t K_2(u,t)\sigma_u^2du \right)+ \frac{\beta_2}{2\sqrt{R_{2,t}}}K_2(t,t)\sigma_t^2 \\
    &= h'(t)\sigma_t -h'(t)\beta_0-h'(t)\beta_2 \sqrt{R_{2,t}} + \frac{\beta_2}{2\sqrt{R_{2,t}}}\left(\int_{-\Delta}^t \partial_t K_2(u,t)\sigma_u^2du \right)+ \frac{\beta_2}{2\sqrt{R_{2,t}}}K_2(t,t)\sigma_t^2 \\
    &=  h'(t)\sigma_t -h'(t)\beta_0+ \frac{\beta_2}{\sqrt{R_{2,t}}}\int_{-\Delta}^t\left(-h'(t)K_2(u,t)+\frac{1}{2}\partial_t K_2(u,t)\right)\sigma_u^2du +\frac{\beta_2}{2\sqrt{R_{2,t}}}K_2(t,t)\sigma_t^2.
  \end{aligned}
\end{equation*}
Using that $h'\le 0$, $\beta_0,\beta_2\ge 0$ and assumption \ref{hyp:2.3}, we have $G_t \ge h'(t)\sigma_t$. Moreover, $\gamma$ satisfies assumption \ref{hyp:sigma_reg} of Lemma \ref{lem:comparison_result} since $t\mapsto K_1(t,t)$ is locally square-integrable by assumption \ref{hyp:2.1}. Moreover, the function $b_X(t,x)=h'(t)x$ satisfies assumption \ref{hyp:b_ineq} in this lemma since $h'$ is non positive and such that $\inf_{t\ge 0}h'(t)>-\infty$ by assumption \ref{hyp:2.3}. According to Lemma \ref{lem:comparison_result}, we therefore have $\sigma_t \ge X_t$ for all $t\in [0,\tau)$ almost surely where $X$ is solution of the following SDE:
\begin{equation*}
  dX_t =h'(t) X_t dt + \beta_1K_1(t,t)X_t dW_t,\;X_0=\sigma_0.
\end{equation*}
 The solution of this SDE is $X_t=\sigma_0\exp\left(\beta_1\int_0^t K_1(s,s) dW_s + h(t)-h(0) -\frac{1}{2}\beta_1^2\int_0^tK_1(s,s)^2ds  \right)$ which concludes the proof. 
\end{proof}

We conclude the proof of Theorem \ref{thm:main_result_2} by showing that condition \ref{hyp:1.6} is implied by the conditions of Theorem \ref{thm:main_result_2} so that we can apply Theorem \ref{thm:main_result_1} to get the global existence and uniqueness of a solution $(R_1,R_2)$ to Equation (\ref{eq:r_dynamics}). 

\begin{lemma}\label{lem:condition_1.6}
  Assumption \ref{hyp:1.6} holds if assumptions \ref{hyp:2.2}, \ref{hyp:2.3} and \ref{hyp:2.4} are satisfied. 
\end{lemma}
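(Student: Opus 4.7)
The plan is to derive a lower bound for $K_2(s,t)$ in terms of $K_2(s,0)$ and $h$, and then integrate against the initial conditions. Fix $s\in(-\Delta,0]$. Because $K_2$ is continuously differentiable in its second variable by \ref{hyp:2.1}, I will consider the auxiliary function $\phi_s(t):=K_2(s,t)e^{-2h(t)}$ for $t\ge 0$. A direct computation gives
\begin{equation*}
  \phi_s'(t)=e^{-2h(t)}\bigl(\partial_t K_2(s,t)-2h'(t)K_2(s,t)\bigr),
\end{equation*}
which is non-negative by assumption \ref{hyp:2.3}. Hence $\phi_s$ is non-decreasing on $[0,\infty)$, so that $K_2(s,t)\ge K_2(s,0)\,e^{2(h(t)-h(0))}$ for every $t\ge 0$.

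Next I will plug this pointwise inequality into the definition of $g_2$. Since the integrand $(\beta_0+\beta_1 r_{1,s}+\beta_2\sqrt{r_{2,s}})^2$ is non-negative and does not depend on $t$, I obtain
\begin{equation*}
  g_2(t)=\int_{-\Delta}^{0}K_2(s,t)\bigl(\beta_0+\beta_1 r_{1,s}+\beta_2\sqrt{r_{2,s}}\bigr)^2 ds
  \ge e^{2(h(t)-h(0))}\,g_2(0).
\end{equation*}

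Finally, I take the infimum over $t\in[0,T]$. Because $h$ is non-increasing and differentiable, hence continuous on the compact interval $[0,T]$, the quantity $h(T)-h(0)$ is a finite non-positive real, so $\inf_{t\in[0,T]} e^{2(h(t)-h(0))}=e^{2(h(T)-h(0))}>0$. Combining this with the strict positivity of $g_2(0)$ granted by \ref{hyp:2.4} yields
\begin{equation*}
  \inf_{t\in[0,T]}g_2(t)\ge g_2(0)\,e^{2(h(T)-h(0))}>0,
\end{equation*}
which is exactly assumption \ref{hyp:1.6}. The argument is essentially a one-line ODE comparison, so there is no real obstacle; the only points requiring a little care are justifying the differentiation of $\phi_s$ (covered by \ref{hyp:2.1}) and ensuring finiteness of $h(T)$, which follows from differentiability of $h$ on $[0,\infty)$ stated in \ref{hyp:2.2}.
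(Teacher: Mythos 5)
Your proof is correct and follows essentially the same route as the paper. The paper's proof appeals to the inequality $K_2(s,t)\ge K_2(s,0)\exp\bigl(2\int_0^t h'(u)\,du\bigr)$ (observed in Remark~\ref{rk:cutoff}), which is precisely the pointwise bound you derive by checking that $\phi_s(t)=K_2(s,t)e^{-2h(t)}$ is non-decreasing; since $h(t)-h(0)=\int_0^t h'(u)\,du$, the two lower bounds on $g_2(t)$ coincide, and both conclude by monotonicity of $h$ to evaluate the infimum at $t=T$.
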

\begin{proof}
Let $T>0$ and $t\in[0,T]$. As already observed in Remark \ref{rk:cutoff}, assumption \ref{hyp:2.3} implies that:
  \begin{equation*}
    K_2(s,t) \ge K_2(s,0) e^{2\int_0^t h'(u)du}\quad \forall s \in (-\Delta,0]. 
  \end{equation*}
  Thus,
  \begin{equation*}
    \begin{aligned}
    g_2(t)&:= \int_{-\Delta}^0 K_2(s,t) \left(\beta_0+\beta_1r_{1,s}+\beta_2\sqrt{r_{2,s}}\right)^2 ds \\
    &\ge e^{2\int_0^t h'(u)du }\underbrace{\int_{-\Delta}^0 K_2(s,0) \left(\beta_0+\beta_1r_{1,s}+\beta_2\sqrt{r_{2,s}}\right)^2ds}_{g_2(0)}.
    \end{aligned}
  \end{equation*}
  Hence,
  \begin{equation*}
    \inf_{t\in [0,T]} g_2(t) \ge e^{2\int_0^T h'(u)du} g_2(0) >0
  \end{equation*}
  by assumptions \ref{hyp:2.2} and \ref{hyp:2.3}. 
\end{proof}


\bibliographystyle{abbrvnat}
\bibliography{bibli}
\end{document}